\title{Markov Models for Accumulating Mutations
}
\author{Niko Beerenwinkel}
\address{ Department of Biosystems Science and Engineering,
ETH Zurich, Mattenstrasse 26, CH-4058 Basel, Switzerland}
\email{niko.beerenwinkel@bsse.ethz.ch}
\author{Seth Sullivant}
\address{
Department of Mathematics and Society of Fellows, 
Harvard University, Cambridge, MA 02138}  
\email{seths@math.harvard.edu}
\theoremstyle{plain}
\newtheorem{thm}{Theorem}[section]
\newtheorem{prop}[thm]{Proposition}
\theoremstyle{definition}
\newtheorem{ex}[thm]{Example}
\theoremstyle{remark}
\newtheorem*{rmk}{Remark}
\newcommand{\rr}{\mathbb{R}}
\newcommand{\bbe}{\mathbb{E}}
\newcommand{\ind}{\mbox{$\perp \kern-5.5pt \perp$}}
\newcommand{\Prob}{\mathrm{Prob}}
\newcommand{\rmpa}{\mathrm{pa}}
\newcommand{\Exp}{\mathrm{Exp}}
\newcommand{\exit}{\mathrm{Exit}}
\begin{document}

\begin{abstract}
We introduce and analyze a waiting time model for the accumulation of
genetic changes.  The continuous time conjunctive Bayesian network
is defined by a partially ordered set of mutations and by the rate of 
fixation of each mutation.  The partial order encodes constraints on 
the order in which mutations can fixate in the population, shedding
light on the mutational pathways underlying the evolutionary process.
We study a censored version of the model and derive equations for an 
EM algorithm to perform maximum likelihood estimation of the model 
parameters.  We also show how to select the maximum likelihood poset.  
The model is applied to genetic data from different cancers and from 
drug resistant HIV samples, indicating implications for diagnosis and 
treatment. 
\end{abstract}

\maketitle


\section{Introduction}

The genetic progression of cancer is characterized by the accumulation
of mutations in oncogenes and in tumor suppressor genes.
Recent studies have shown that during the somatic evolution of cancer
mutations in over 100 human genes are selected for, suggesting their beneficial
effect on the growth of the cancer cell \citep{Sjoeblom2006}.

In HIV infection, the virus acquires mutations in CTL epitopes that
interfere with the immune response. This evolutionary process is specific for
the genetic makeup of the infected host. Recently, a total of 478 CTL escape 
mutations have been identified in the HIV genome \citep{Brumme2007}.

Under drug treatment, HIV develops mutations that confer resistance to the
applied drugs. Eventually, this evolutionary escape leads to therapy failure.  
More than 50 drug resistance-associated mutations are known
in three different HIV proteins \citep{Johnson2006}.

These three evolutionary scenarios have in common that, for the population
of individuals, several mutations are available which increase fitness.
Adaption is therefore characterized by the accumulation of these beneficial 
mutations which are virtually non-reversible. 

In this paper, we introduce a statistical model for the accumulation of
genetic changes. The continuous time conjunctive Bayesian network (CT-CBN)
is a continuous time Markov chain model, defined by a partially ordered
set (poset) of advantageous mutations, and the rate of fixation for each mutation. 
The partial order encodes constraints on the succession in which mutations 
can occur and fixate in the population. We assume that the fixation times
follow independent exponential distributions. The exponential waiting
process for a mutation starts only when all predecessor mutations of
that mutation in the poset have already occurred. 
The order constraints and waiting times reveal important information 
on the underlying biological process with implications
for diagnosis and treatment. 

The CT-CBN is a continuous time analogue of the discrete 
conjunctive Bayesian network (D-CBN) introduced by \citet{Beerenwinkel2006a}.
The D-CBN was shown to have very desirable statistical and
algebraic properties \citep{Beerenwinkel2006e}. 
We argue that the continuous time CBN is the more natural model 
for the waiting process described above,
and we explore the connection to the discrete CBN. 
A special case of the D-CBN, where the poset is a tree, is known as
the oncogenetic or mutagenetic tree model    
\citep{Desper1999,Beerenwinkel2005f,Beerenwinkel2005b}.
It has been applied to the somatic evolution of cancer 
\citep{Radmacher2001,Rahnenfuehrer2005}
and to the evolution of drug resistance in HIV
\citep{Beerenwinkel2005a}.
The basic mutagenetic tree model has been extended to
a mixture model \citep{Beerenwinkel2005f} and to account for 
longitudinal data \citep{Beerenwinkel2007a}.

A related tree model by \citet{vonHeydebreck2004} represents the
genetic changes at the leaves of the tree and regards the interior
vertices as hidden events. Several authors have considered larger
model classes, including general Bayesian networks
\citep{Simon2000,Deforche2006} and general Markov chain models
on the state space of mutational patterns \citep{Foulkes2003b,Hjelm2006}.
As compared to trees and posets, these models are more flexible 
in describing mutational pathways, but parameter estimation and
model selection is considerably more difficult. In fact, the number of free
parameters of these models is typically exponential in the number
of mutations. By contrast, in the CT-CBN model, the number of free
parameters equals the number of mutations. We demonstrate that
parameter estimation and selection of an optimal poset can be
performed efficiently for CT-CBNs. Thus, they provide an
attractive framework for modeling the accumulation of mutations,
especially if the number of mutations is moderate or large.

We formally define the CT-CBN in the next Section~2 and derive some
basic properties of the model.  The CT-CBN is an example of a regular exponential family with closed form maximum likelihood estimates (MLEs).   In Section~3, we make precise the
relation between the CT-CBN and the D-CBN.  Section~4 deals with
a censored version of the CT-CBN which is most relevant for 
observed data.  The censored model lacks  a closed form expression for the MLE, but has a natural EM algorithm for approximating maximum likelihood estimates.   We apply our methods in Section~5 to genetic data from cancer cells and from drug resistant HI viruses. 
We close with a discussion in Section~6.

\section{Continuous Time Conjunctive Bayesian Networks}

In this section, we introduce and describe some of the basic properties of
continuous time conjunctive Bayesian networks (CT-CBN). These models are
continuous time Markov chain models on the distributive lattice of a poset.
To begin, we review some background material from combinatorics.  
The relevant combinatorial material can be found in introductory sections 
of \citet{Beerenwinkel2006a} and more detailed information is covered in 
\citet{Stanley1999}.  

A partially ordered set (poset) is a set $P$ with a 
transitive relation $\preceq$.  In our models, the set $P$ will be a set 
of genetic events, and the order relation $\preceq$ specifies partial 
information about the order in which these events must occur. The relation
$p \prec q$  implies that event $p$ happens before event $q$.  
The distributive lattice of order ideals of $P$, denoted by $J(P)$, consists 
of all subsets $S \subseteq P$, that are closed downward, i.e.,
$S \in J(P)$ if and only if, for all $q \in S$ and $p \prec q$, we have that 
$p \in S$. The order ideals of $P$ correspond to the genotypes (or
mutational patterns) that are compatible with the order constraints.
We refer to $\emptyset \in J(P)$ as the wild type.

\begin{ex}  \label{ex:running}
As a running example, consider the poset $P$ on the four element set 
$[4] = \{1,2,3,4\}$ subject to the order relations 
$1 \prec 3$, $2 \prec 3$, $2 \prec 4$.  
The distributive lattice $J(P)$ of order ideals of $P$ consists of eight
elements.  Both sets are displayed in Figure~\ref{fig:running}.  
\end{ex} 

\begin{figure}[t!]
  \centering
  \begin{tabular}{ccc}
    \begin{minipage}{3cm}
      \xymatrix@!=0.2cm{
        1\ar@{->}[d] & 2\ar@{->}[dl]\ar@{->}[d] \\
        3 & 4
      }
    \end{minipage}
    &
    \begin{minipage}{6cm}
      \begin{eqnarray*}
        & & X_1 \sim \Exp(\lambda_1) \\
        & & X_2 \sim \Exp(\lambda_2) \\
        & & X_3 \sim \max(X_1, X_2) + \Exp(\lambda_3) \\
        & & X_4 \sim X_2 + \Exp(\lambda_4)
      \end{eqnarray*}
    \end{minipage}
    &
    \begin{minipage}{5cm}
      \xymatrix@!=0.1cm{
        & \{1,2,3,4\}\ar@{-}[dl]\ar@{-}[dr] & & \\
        \{1,2,3\}\ar@{-}[dr] & & \{1,2,4\}\ar@{-}[dl]\ar@{-}[dr] & \\
        & \{1,2\}\ar@{-}[dl]\ar@{-}[dr] & & \{2,4\}\ar@{-}[dl] \\
        \{1\}\ar@{-}[dr] & & \{2\}\ar@{-}[dl] & \\
        & \emptyset & & 
      }
    \end{minipage}
    \\
    Poset, $P$ &  Waiting times, $X$ & Lattice of order ideals, $J(P)$
  \end{tabular}
  \caption{Running example.
  }
  \label{fig:running}
\end{figure}
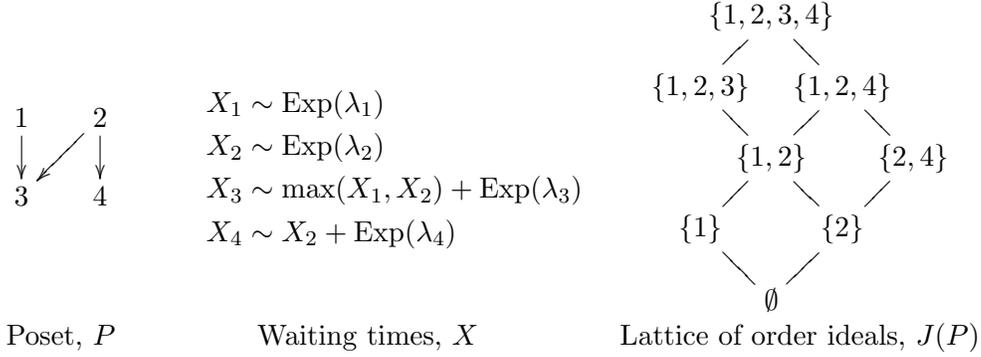
  
Let $P$ be a poset with ground set $[n]$.  For each
event $i \in P$, we define a random variable $Z_i \sim \Exp(\lambda_i)$.
Then we define the random variables $X_i$ as
\[
   X_i  =  \max_{j \in \rmpa(i)} \{ X_j \}  + Z_i,
   \quad i = 1, \dots, n.
\]
Here ${\rmpa(i)}$ is the set of all predecessors of event $i$ in the poset
$P$.  The random variable $X_i$ describes how long we have to wait until
event $i$ occurs assuming that we start at time zero with no events.  
Mutation $i$ cannot occur until all the mutations preceding it in the partial 
order $P$ have occured.  The family of joint distributions of $X$ defined in
this manner is the continuous time conjunctive Bayesian network (CT-CBN). It
has state space $\rr^n_{>0}$ consisting of vectors of waiting times and parameters 
$\lambda = (\lambda_1, \dots, \lambda_n) \in \rr^n_{> 0}$.

The probability density function associated to this model is easy to write
down, given the recursive conditional nature of the distribution.  
The density function is
\begin{equation} \label{eq:density}
  f_{P,\lambda}(t) =  \prod_{i=1}^n  \lambda_i  
    \exp (  - \lambda_i(  t_i  -  \max_{j \in \rmpa(i)} t_j) ), 
  \quad \mbox{ if } t_i > \max_{j \in \rmpa(i)} t_j 
  \mbox{ for all } i \in [n],
\end{equation}
and $f_{P,\lambda}(t) = 0$ otherwise.
The CT-CBN is an example of a regular exponential family, 
with minimal sufficient statistic consisting of the vector of time
differences  $(t_i -  \max_{j \in \rmpa(i)} t_j)_{i \in [n]}$.  

One instance of the random variable $X$ is a sequence of times 
$T = (t_1, \ldots, t_n)$ that satisfy the inequality relations implied
by $P$, i.e., $t_i > \max_{j \in \rmpa(i)} t_j$ for all $i \in [n]$.  A set of times satisfying the indicated inequality constraints is \emph{compatible} with the poset.
Thus, the data for the model is a list of sequences of times 
$T_1, \ldots, T_N$, where $N$ is the number of observations.

\begin{prop} \label{prop:ml}
Let $P$ be a poset and $T_1, \ldots, T_N$ be a collection of data.  If
any of the $T_k$ are incompatible with the poset $P$, the maximum
likelihood estimate does not exist (the likelihood function is
identically zero).  Otherwise, the maximum likelihood
estimate of $\lambda$ is given by
\[
   \widehat{\lambda}_i = 
     \frac{N}{\sum_{k =1}^N (t_{ki} - \max_{j \in \rmpa(i)} t_{kj} )  },
     \quad i \in [n].
\]
\end{prop}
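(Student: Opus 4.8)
The plan is to treat this as a standard maximum likelihood computation for an exponential family, since the excerpt already tells us the CT-CBN is a regular exponential family with the time differences as minimal sufficient statistic. First I would dispose of the degenerate case: if some $T_k$ is incompatible with $P$, then by the definition in \eqref{eq:density} the density $f_{P,\lambda}(T_k)$ is zero for every choice of $\lambda$, so the likelihood $\prod_{k=1}^N f_{P,\lambda}(T_k)$ vanishes identically and no maximizer exists. This reduces the problem to the case where all data are compatible, so that every factor $t_{ki} - \max_{j \in \rmpa(i)} t_{kj}$ is strictly positive.

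For the main case, I would write down the log-likelihood. Substituting \eqref{eq:density} and taking logarithms, the contributions from the different coordinates $i$ separate cleanly:
\begin{equation} \label{eq:loglik}
  \ell(\lambda) = \sum_{i=1}^n \left( N \log \lambda_i - \lambda_i \sum_{k=1}^N \bigl( t_{ki} - \max_{j \in \rmpa(i)} t_{kj} \bigr) \right).
\end{equation}
The key structural observation is that $\ell$ splits as a sum of $n$ independent one-variable functions, one per parameter $\lambda_i$, so the optimization decouples and I can maximize each summand separately over $\lambda_i \in \rr_{>0}$. Writing $S_i = \sum_{k=1}^N ( t_{ki} - \max_{j \in \rmpa(i)} t_{kj} )$ for the relevant sufficient statistic, each summand is $N \log \lambda_i - \lambda_i S_i$.

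Next I would take the derivative of each summand with respect to $\lambda_i$, set it to zero, and solve: $\partial_{\lambda_i} \ell = N/\lambda_i - S_i = 0$ yields $\widehat{\lambda}_i = N / S_i$, which is exactly the claimed formula. To confirm this critical point is the global maximum rather than a saddle or minimum, I would note that the second derivative $-N/\lambda_i^2$ is strictly negative, so each summand is strictly concave on $\rr_{>0}$; combined with the fact that $N\log\lambda_i - \lambda_i S_i \to -\infty$ as $\lambda_i \to 0^+$ or $\lambda_i \to \infty$ (using $S_i > 0$ in the compatible case), this guarantees a unique interior maximizer. The main thing to be careful about is precisely this positivity of $S_i$: it is what makes $\widehat{\lambda}_i$ finite and positive, and it is exactly where the compatibility hypothesis is used, so I would make sure to invoke it explicitly rather than just citing concavity. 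None of the steps is a genuine obstacle; the only subtlety worth flagging is that the decoupling and the closed form are really consequences of the regular exponential family structure already asserted in the text, so the proof is essentially a verification of the standard exponential-family MLE in this coordinate system.
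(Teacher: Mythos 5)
Your proposal is correct and follows essentially the same route as the paper: write the separable log-likelihood, differentiate in each $\lambda_i$, and solve the first-order condition. You go slightly further than the paper by verifying strict concavity and the boundary behavior (using $S_i > 0$) to confirm a global interior maximum, a standard check the paper leaves implicit.
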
   

\begin{proof}
Suppose the data $T_1, \ldots, T_N$ are compatible with the poset $P$. 
Then from Equation~\ref{eq:density}, the log-likelihood function is 
\[
   \ell(\lambda_1, \dots, \lambda_n) = 
   \sum_{k =1}^N   \sum_{i =1}^n \left( \log \lambda_i 
     - \lambda_{i} (t_{ki} -  \max_{j \in \rmpa(i)} t_{kj} )  \right).
\]
Differentiating with respect to $\lambda_i$  yields the equations
$$  \sum_{k =1}^N    \left( \frac{1}{\lambda_{i}}    -  (t_{ki} -  \max_{j \in \rmpa(i)} t_{kj})  \right)  = 0,$$
and the claimed formula follows by solving for $\lambda_i$.
\end{proof}

\begin{thm}  \label{thm:MLposet}
Given data $T_1, \ldots, T_N$, the maximum likelihood poset is the
largest poset that is compatible with the data.
\end{thm}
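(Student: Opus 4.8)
The plan is to reduce the model-selection problem to a monotonicity statement about the waiting-time sums, after first pinning down what ``largest compatible poset'' means. I would define a relation $R$ on $[n]$ by declaring $j \mathrel{R} i$ precisely when $t_{kj} < t_{ki}$ for every observation $k = 1, \dots, N$. A short check shows that $R$ is irreflexive and transitive (if $t_{kj} < t_{ki}$ and $t_{ki} < t_{k\ell}$ for all $k$, then $t_{kj} < t_{k\ell}$ for all $k$), hence a strict partial order, and that the data are compatible with it, since $\max_{j \mathrel{R} i} t_{kj} < t_{ki}$ for each $k$. Conversely, any poset $P$ compatible with the data must satisfy $\rmpa_P(i) \subseteq \{ j : j \mathrel{R} i\}$: a relation $j \prec_P i$ forces $t_{kj} \le \max_{j' \in \rmpa_P(i)} t_{kj'} < t_{ki}$ for all $k$. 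Thus $R$ contains the order relation of every compatible poset and is itself compatible, so it is the unique largest compatible poset and the only candidate for the maximum likelihood poset.

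Next I would evaluate the maximized likelihood of an arbitrary compatible poset $P$ by substituting the closed-form MLE of Proposition~\ref{prop:ml} into the log-likelihood. Writing $S_i(P) = \sum_{k=1}^N (t_{ki} - \max_{j \in \rmpa_P(i)} t_{kj})$ for the strictly positive denominator of $\widehat{\lambda}_i = N / S_i(P)$, the plug-in collapses the log-likelihood to
\[
\ell(\widehat{\lambda}) = Nn(\log N - 1) - N \sum_{i=1}^n \log S_i(P),
\]
because each term $N \log \widehat{\lambda}_i - \widehat{\lambda}_i S_i(P)$ simplifies to $N\log N - N\log S_i(P) - N$. Since the first term is independent of $P$, maximizing the likelihood over posets is equivalent to minimizing $\sum_{i} \log S_i(P)$.

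The key step is then a one-line monotonicity observation: if $P \subseteq P'$ as order relations, then $\rmpa_P(i) \subseteq \rmpa_{P'}(i)$, so each inner maximum $\max_{j \in \rmpa(i)} t_{kj}$ is taken over a larger predecessor set and can only increase, giving $S_i(P') \le S_i(P)$ for every $i$. Applying this with $P' = R$ yields $S_i(R) \le S_i(P)$ for every compatible $P$ and every $i$, and since $\log$ is increasing this gives $\sum_i \log S_i(R) \le \sum_i \log S_i(P)$. Hence $R$ attains the largest maximized likelihood among compatible posets. Incompatible posets have likelihood identically zero by Proposition~\ref{prop:ml} and are dominated by $R$, whose likelihood is strictly positive, so $R$ is the maximum likelihood poset overall.

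I expect the genuine obstacle to lie in the bookkeeping around the largest compatible poset rather than in the optimization: one must verify carefully that the pairwise time-dominance relation $R$ is transitive (so that it is an honest poset) and that it is simultaneously compatible with the data, so that it is not merely an upper bound on the compatible posets but is actually attained as one of them. Once $R$ is established as the unique maximal compatible poset, the remainder is the routine plug-in computation of the previous paragraph together with the monotonicity of $S_i$ under enlargement of the predecessor sets, both of which are elementary. A minor point worth noting for cleanliness is that $S_i(P) > 0$ for every compatible $P$, which is exactly what guarantees that the logarithms are well defined and that the maximum is attained in the interior of the parameter space.
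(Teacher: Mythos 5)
Your proof is correct and takes essentially the same route as the paper: substitute the closed-form MLE of Proposition~\ref{prop:ml} into the log-likelihood (the paper writes the maximized value as $\sum_{i=1}^n (N \log \widehat{\lambda}_i - N)$, which is your $Nn(\log N - 1) - N\sum_{i=1}^n \log S_i(P)$ in disguise) and observe that enlarging the predecessor sets can only increase the inner maxima, hence weakly increase the likelihood. Your only substantive addition is the explicit construction of the pairwise time-dominance relation $R$ and the verification that it is the unique largest compatible poset --- a well-posedness step the paper leaves implicit --- which is a welcome bit of rigor but not a different method.
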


\begin{proof}
Let $(P^1, \prec_1)$ and $(P^2, \prec_2)$ be two posets, both compatible with
the data, such that $P^1$ is a refinement of $P^2$ (that is, every relation 
that holds in $P^2$ also holds in $P^1$).   
It suffices to show that the likelihood function evaluated at the MLE is
larger for $P^1$ than $P^2$, because this implies that adding relations 
compatible with the data increases the likelihood.

Denote the MLEs for $P^1$ and $P^2$ by $\widehat{\lambda}^1$ and 
$\widehat{\lambda}^2$, respectively.  According to Proposition \ref{prop:ml} 
these values are given by
$$\widehat{\lambda}^l_i = \frac{N}{\sum_{k =1}^N (t_{ki} - \max_{j \in \rmpa_l(i)} t_{kj} ) }.$$
It does not change the expression to replace $\rmpa_l(i)$ with the set 
$\{j \in P^l \, : \, j \prec_l i \mbox{ in } P^l \}$.  
Since $P^1$ has more relations than $P^2$, this implies that the maximum is
taken over a strictly larger set, and thus 
$\widehat{\lambda}^1_i   \, \geq  \,  \widehat{\lambda}^2_i$
for all $i$.

However, the log-likelihood function evaluated at $\widehat{\lambda}^l$ is
\begin{eqnarray*}
\ell_l(\widehat{\lambda}^l \mid T)  & = & \sum_{k =1}^N   \sum_{i =1}^n \left( \log \widehat{\lambda}^l_i    - \widehat{\lambda}^l_{i} (t_{ki} -  \max_{j \in \rmpa_l(i)} t_{kj} )  \right)  \\
&   =  &  \sum_{i =1}^n \left( N \log \widehat{\lambda}^l_i  -     \widehat{\lambda}^l_i \sum_{k =1}^N 
 (t_{ki} -  \max_{j \in \rmpa_l(i)} t_{kj} )  \right) \\
 &  = &  \sum_{i =1}^n (N  \log \widehat{\lambda}^l_i  - N). 
\end{eqnarray*}
Since the logarithm is a monotone function, we deduce that
$\ell_1(\widehat{\lambda}^1 \mid T) \geq  \ell_2(\widehat{\lambda}^2 \mid T).$
\end{proof}

One of the most interesting quantities that we can compute with respect to 
the CT-CBN is the expected waiting time until a particular pattern 
$S \in J(P)$ is reached in the course of evolution.  In other words,
assuming that the parameters $\lambda$ are known, we are asking how long 
it takes until a certain collection of genetic events have occurred.
The expected waiting time is an important measure of genetic progression.
\citet{Rahnenfuehrer2005} have shown that, for mutagenetic trees, it is 
a prognostic factor of survival and time to relapse in glioblastoma and 
prostate cancer patients, respectively, even after adjustment for 
traditional clinical markers.

Note that because the exponential distributions are memoryless, 
calculating the waiting time from the wild type will also serve for 
determining the waiting time between any two patterns.  
Furthermore, the nature of the conditional factorization for the joint 
density of $X$ implies that we can restrict attention to the case 
where $S = P$, i.e., to determining the waiting time until all events 
have occurred.

Let $S \in J(P)$ be an observable genotype.  We define 
${\rm Exit}(S) = \{ j \in P \mid j \notin S,  S \cup \{j \} \in J(P) \}$
to be the set of events that have not occurred in $S$, but could
occur next.  For any subset $T \subseteq P$,  we set
$\lambda_T  =  \sum_{j \in T}  \lambda_j$.
A chain in the distributive lattice $J(P)$ is a collection of subsets 
$C_0, C_1, \ldots, C_k \in J(P)$ that satisfy $C_i \subset C_{i+1}$ 
for all $i$, with all containments strict.  A chain is maximal if it is 
as long as possible.  Note that all maximal chains in the distributive 
lattice $J(P)$ have length $n + 1$ with $n = |P|$ and start with the 
empty set as $C_0$ and reach the maximum at $C_n  = P$.  
Let $\mathcal{C}(J(P))$ denote the collection of maximal chains in $J(P)$;
a typical element is denoted $C = (C_0, \ldots, C_n)$.

\begin{thm}\label{thm:basicexpect}
The expected waiting time until all events have occurred is given by the expression
$$\bbe [  \max_{i \in P}  X_i  ] \, \,  =  \, \,    \lambda_1 \cdots \lambda_n  \sum_{C \in \mathcal{C}(J(P))}     \left( \prod_{i = 0}^{n-1}  \frac{1}{\lambda_{{\rm Exit}(C_i)} }\right)  \cdot  \left(  \sum_{i = 0}^{n-1}  \frac{1}{\lambda_{{\rm Exit}(C_i)} }\right).$$
\end{thm}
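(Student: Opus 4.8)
The plan is to recognize the accumulation process as a continuous time Markov chain on the lattice $J(P)$ and then to compute the expected absorption time by averaging over trajectories. First I would observe that for any time $t$ the set of events that have occurred, $G(t) = \{i \in P : X_i \le t\}$, is an order ideal: if $k \prec i$ then $k \in \rmpa(i)$, so $X_i \ge X_k + Z_i > X_k$, whence $X_k \le X_i \le t$. Thus $G(t)$ sweeps out a maximal chain in $J(P)$ as $t$ grows from $0$ to $\max_i X_i$, starting at $\emptyset$ and ending at $P$, adding one element at each of the $n$ jump times (ties occur with probability zero).

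The crucial step is to show that this trajectory is generated by a Markov chain whose holding times and jump probabilities I can read off. Suppose the chain has reached state $S \in J(P)$ at time $\tau_S$. For each $j \in \exit(S)$ all predecessors of $j$ lie in $S$, so the clock $Z_j$ has already started, at time $\max_{k \in \rmpa(j)} X_k \le \tau_S$, but $j$ has not yet fired since $j \notin S = G(\tau_S)$. By the memorylessness of the exponential distribution, the residual waiting time of each such clock from $\tau_S$ is again $\Exp(\lambda_j)$, and these residuals are independent because the $Z_j$ are. Hence the time spent in $S$ is the minimum of independent exponentials, namely $\Exp(\lambda_{\exit(S)})$, and the next event is $j$ with probability $\lambda_j / \lambda_{\exit(S)}$, independently of the holding time. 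One also checks that a newly enabled event $j' \in \exit(S \cup \{j\}) \setminus \exit(S)$ has its clock start exactly at the jump time, which confirms the Markov property.

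With this structure in hand I would apply the law of total expectation, partitioning on the trajectory, which is exactly a maximal chain $C = (C_0, \ldots, C_n) \in \mathcal{C}(J(P))$. Conditional on $C$, the holding times are independent with means $1/\lambda_{\exit(C_i)}$, so
\[
   \bbe[\max_{i \in P} X_i \mid C] = \sum_{i=0}^{n-1} \frac{1}{\lambda_{\exit(C_i)}}.
\]
Writing $C_{i+1} = C_i \cup \{j_i\}$, the probability of the trajectory $C$ is the product of jump probabilities $\prod_{i=0}^{n-1} \lambda_{j_i}/\lambda_{\exit(C_i)}$. Since $j_0, \ldots, j_{n-1}$ is a permutation of $[n]$, the numerator collapses to $\lambda_1 \cdots \lambda_n$, giving $\Prob(C) = \lambda_1 \cdots \lambda_n \prod_{i=0}^{n-1} 1/\lambda_{\exit(C_i)}$. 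Summing $\Prob(C)\,\bbe[\max_{i \in P} X_i \mid C]$ over all $C$ and pulling the common factor $\lambda_1 \cdots \lambda_n$ to the front yields the claimed expression.

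I expect the main obstacle to be the second paragraph: making the reduction to a Markov chain fully rigorous, and in particular justifying that, conditional on the embedded jump chain, the holding times remain independent exponentials with the stated rates. The memorylessness argument must be applied uniformly over all states and jump epochs, and one must confirm that freshly enabled clocks start precisely at the transition times; once that bookkeeping is settled, the remaining computation is routine.
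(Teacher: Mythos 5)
Your proof is correct, but it takes a genuinely different route from the paper. The paper computes $\bbe[\max_i X_i]$ by brute force as an integral of $\max_i t_i \cdot f_{P,\lambda}(t)$ over the positive orthant: it splits the orthant into the regions $t_{\sigma_1} < \cdots < t_{\sigma_n}$ (nonzero contributions coming exactly from linear extensions of $P$, i.e.\ maximal chains of $J(P)$), applies the unimodular change of coordinates $u_0 = t_1$, $u_i = t_{i+1} - t_i$, and evaluates the resulting product-form integrals term by term. Your argument replaces this calculus with the probabilistic structure: you verify that the occurred-set process $G(t)$ is a continuous time Markov chain on $J(P)$ whose holding time in state $S$ is $\Exp(\lambda_{\exit(S)})$ and whose jump targets have probabilities $\lambda_j/\lambda_{\exit(S)}$, then condition on the embedded jump chain (a maximal chain $C$), using the standard facts that competing independent exponentials have an $\Exp(\lambda_{\exit(S)})$ minimum independent of the argmin, that residual clocks stay exponential by memorylessness, and that conditional on the jump chain the holding times are independent with the stated rates; the law of total expectation and the collapse $\prod_i \lambda_{j_i} = \lambda_1 \cdots \lambda_n$ then give the formula. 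The two proofs are really dual views of the same decomposition --- the paper's integration regions are your trajectories, and its $u_i$ variables are your holding times --- but yours buys conceptual transparency (the formula is visibly $\sum_C \Prob(C)\,\bbe[\text{absorption time} \mid C]$, and it connects to the generator $Q$ the paper only introduces later, in Section~3), at the cost of importing standard continuous-time Markov chain theory, exactly at the step you flag; the paper's integral method is self-contained and, more importantly, is reused verbatim for Theorems~\ref{thm:prob} and~\ref{thm:Estep}, where the integrands $(t_i - \max_{j \in \rmpa(i)} t_j)\,\mathbb{I}_Q(t) f(t)$ are slightly less natural to handle by pure conditioning (though your method extends there too, e.g.\ $\Prob(X \vdash Q)$ is just the total probability of the chains of $J(Q)$ under the jump chain of $P_s$). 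Your flagged worry about making the jump-chain reduction rigorous is legitimate but routine: it is precisely the jump-chain/holding-time characterization of a CTMC (see \citet{Norris1997}), combined with your correct observation that freshly enabled clocks start at the transition epochs, so there is no gap.
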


Before proving Theorem \ref{thm:basicexpect}, we want to briefly mention the idea of the proof, because it will be a common technique for proofs throughout the paper.  First of all, the indicated expectation involves the integral of a function that depends on maxima, which are not so simple to integrate directly.  So the first step is to decompose the integral into a sum of integrals over many different regions (one for each maximal chain in $J(P)$).  Over these simpler regions, the maximum function disappears.  Furthermore, these regions are each simplicial cones and the integral can then be computed by a simple change of coordinates.

\begin{proof}
Let $f(t)$ be the density function from Equation \ref{eq:density}.  We must compute
\begin{equation} \label{eq:bigexp}
  \int_{\rr^n_{\geq 0}}    \max_{i \in P}  t_i    \cdot  f(t) \, dt
  \end{equation}
Let $S_n$ denote the symmetric group on $n$ letters with $\sigma = (\sigma_1,
\ldots, \sigma_n) $ a typical element.  The integral (\ref{eq:bigexp}) over
the positive orthant breaks up as the sum
$$
\sum_{\sigma \in S_n}  \int_{t_{\sigma_1} = 0}^\infty  \int_{t_{\sigma_2} = t_{\sigma_1}}^\infty \cdots   \int_{t_{\sigma_n} = t_{\sigma_{n-1}}}^\infty   t_{\sigma_n}  f(t)  dt.
$$
That is, the sum breaks up the integral into smaller integrals over regions 
$$0 < t_{\sigma_1} < t_{\sigma_2} < \cdots < t_{\sigma_n}.$$
The integrand is zero unless $\sigma_1, \sigma_2, \ldots, \sigma_n$ is a linear extension of the poset $P$.  In other words, the integrand is zero unless the sets
$C_i   =  \cup_{j = 1}^i  \{ \sigma_j \}$ for $i = 0, \ldots, n$
form a maximal chain in the distributive lattice $J(P)$.  So suppose that $\sigma$ is a linear extension of $P$.  Without loss of generality, we may suppose that this linear extension is $1  \prec  2 \prec \cdots \prec n$.

We must compute the integral
$$
\int_{t_{1} = 0}^\infty  \int_{t_{2} = t_{1}}^\infty \cdots   \int_{t_{n} = t_{{n-1}}}^\infty   t_{n}  f(t) \, dt
$$
where over this restricted region, $f(t)$ now has the form
$$ f(t)  =   \prod_{i =1}^n   \lambda_i    \exp (  - \lambda_{i}(t_{i}  - t_{j(i)})) $$
where $j(i)$ is the largest number with $j(i) \prec i$ in $P$.  

Now introduce the change of coordinates
$$u_0 =  t_{1}, \,\,   u_i  =  t_{{i+1} }  -  t_{i} \mbox{ for } i = 1, 2, \ldots, n-1.$$
The determinant of this linear transformation is one, so the integral becomes
$$
\int_{u_0 = 0}^\infty \int_{u_1 = 0}^\infty  \cdots  \int_{u_{n-1} = 0}^\infty  (u_0 + \cdots +  u_{n-1} )  \prod_{i =1}^n  \lambda_i  \exp(-\lambda_{i} (u_{i-1} +  u_{i-2}  + \cdots + u_{j(i)}  ) )  \, du. 
$$
The multiple integral is now over a product domain, and involves a function 
in product form, so we want to break this integral up into the product of integrals.  
To do this, we must collect the $\lambda_{i}$ terms that go with the various $u_k$ terms.  
In the exponent, we have that $\lambda_i$ appears as a coefficient of $u_k$ 
if and only if $i > k \geq j(i)$.  This, in turn, implies that when all the events 
$1,2, \ldots, k$ have occurred, all the predecessor events of $i$ have occurred.  
But this means that $i \in {\rm Exit}(C_k)$, where $C_k = \{1,2\ldots, k \}$.   
Thus, the transformed integral breaks up as a sum of $n$ integrals that have the form:
$$\lambda_1 \cdots \lambda_n  \cdot \int_{u_0 = 0}^\infty \int_{u_1 = 0}^\infty  \cdots  \int_{u_{n-1} = 0}^\infty  u_j   \prod_{i = 0}^{n-1}  \exp(  - \lambda_{{\rm Exit}(C_i)} u_i)  \, du.$$
The integral is over a product domain of a product function.  By elementary integration, it is
$$\lambda_1 \cdots \lambda_n  \frac{1}{\lambda_{{\rm Exit}(C_j)}}  \prod_{i = 0}^{n-1}  \frac{1}{\lambda_{{\rm Exit}(C_i)}},$$
which  completes the proof.
\end{proof}


\section{Relation to the Discrete Conjunctive Bayesian Network}

In this section, we explore the connection between the CT-CBN and the discrete CBN (D-CBN) introduced in \citet{Beerenwinkel2006e}.  
Part of the motivation for this project was to understand how the two types of 
models relate to each other and how structural information from one model 
can be used to deduce information about the other.  Also, we are naturally led 
to study discrete models because we rarely have access to the times at which 
the individual events occurred, but can only check, after a certain sampling
time, which of the events have occurred.

We will show that the D-CBN gives a first order approximation to the transition probabilities in the CT-CBN.  This suggests that
the D-CBN is not optimal from a modeling standpoint as the nature of our
applications is to wait until mutations occur.  On the other hand, the D-CBN 
is much simpler to work with, and maximum likelihood estimates for the D-CBN 
can be used as a first step for iterative algorithms for ML estimation in the 
censored versions of the CT-CBN, described in Section~4.

To explain the first order approximation result, we consider the CT-CBN as a 
continuous time Markov chain on the distributive lattice $J(P)$ (see 
\citet{Norris1997} for background on Markov chains).  
The rate matrix for the Markov chain is the upper-triangular $m \times m$
matrix $Q$, where $m = |J(P)|$. The entries of $Q$ are indexed by pairs of sets
of occurred events $S, T \in J(P)$ and are given by
\[
   Q_{S,T}   =  \left\{  \begin{array}{ll}
       \lambda_{j}  &   \mbox{ if }   S \subset T \mbox{ and } T \setminus S = \{j \}, \\
       - \lambda_{{\rm Exit}(S)}  &  \mbox{ if }  S = T,  \mbox{ and } \\
       0 &  \mbox{ otherwise}.  
     \end{array}  \right.
\]
If we fix a linear extension of $J(P)$ and order the rows and columns of $Q$ according to this linear extension, $Q$ will be an upper triangular matrix.

Let $p(t)$ be the $m \times m$ matrix where, for $S, T \in J(P)$,  
the entry $p_{S,T}(t)$ denotes the probability that the continuous time Markov
chain with state space $J(P)$, is in state $T$ at time $t$ starting from state 
$S$ at time $0$.  This quantity can be calculated by integrating the density 
function from the continuous time model.
However, it is simpler to calculate from standard theory of Markov chains.  
Indeed, the matrix $p(t)$ is the solution to the system of differential equations
$$
\frac{d}{dt}  p(t) =  Q p(t)   
$$
subject to the initial conditions $p(0)  =  I$, the identity matrix.  
The solution to these first order differential equations is obtained by taking 
$p(t)  =  \exp(Qt)$, where $\exp$ denotes the matrix exponential:
$$\exp(Qt)  =  I  + Qt +  \frac{Q^2t^2}{2!}  +   \frac{Q^3t^3}{3!}  +  \cdots .$$ 
Clearly, the solution to the differential equations then satisfies
$$ \frac{d^k}{dt^k} p(t)  |_{t = 0}   =  Q^k$$
so a first order approximation to $p(t)$ is a function $\tilde{p}(t)$ that satisfies
$$\tilde{p}(0)  =  I  \quad \mbox{and}  \quad  \frac{d}{dt} \tilde{p}(t)  |_{t =0}   = Q.$$  

A first order approximation to the CT-CBN can be derived from the D-CBN as
follows.  Associated to the D-CBN are $n$ parameters 
$\theta_1, \ldots, \theta_n$,  where $\theta_i$ is the conditional probability 
that event $i$ has occurred, given that all its predecessor events have
occurred.   By setting
$\theta_i  =  1 - \exp( - \lambda_i t)$ and defining
\[
   \widetilde{p}_{S,T}(t)  = 
   \prod_{i \in T \setminus S} \theta_i \prod_{i \in {\rm Exit}(T)}  (1  - \theta_i),
   \quad \mbox{ if }  S \subseteq T,
\]
and $\widetilde{p}_{S,T}(t) = 0$ otherwise, the D-CBN is naturally interpreted as a continuous time model, where $\widetilde{p}_{S,T}(t)$ is the probability that the D-CBN is in state $T$ at time $t$ given a starting state of $S$ at time $0$.

\begin{prop}
The model $\tilde{p}(t)$ derived from the D-CBN is a first order approximation to the 
CT-CBN $p(t)$.  
\end{prop}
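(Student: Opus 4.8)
The plan is to verify directly the two conditions that define a first order approximation in the paragraph preceding the proposition, namely $\tilde{p}(0) = I$ and $\frac{d}{dt}\tilde{p}(t)|_{t=0} = Q$. Since both $\tilde{p}(t)$ and $Q$ are specified entrywise by their indices $S, T \in J(P)$, it suffices to check these two matrix identities one entry at a time. The computation is powered by a single observation about the substitution $\theta_i = 1 - \exp(-\lambda_i t)$: at $t = 0$ we have $\theta_i(0) = 0$ and hence $1 - \theta_i(0) = 1$, while $\theta_i'(0) = \lambda_i$ and $(1-\theta_i)'(0) = -\lambda_i$.

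First I would establish $\tilde{p}(0) = I$. For $S = T$ the factor $\prod_{i \in T \setminus S} \theta_i$ is an empty product equal to $1$ and $\prod_{i \in \exit(T)}(1 - \theta_i(0)) = 1$, so $\tilde{p}_{S,S}(0) = 1$. For $S \subsetneq T$ the product over $T \setminus S$ contains at least one factor $\theta_i(0) = 0$, forcing $\tilde{p}_{S,T}(0) = 0$; and for $S \not\subseteq T$ the entry is identically zero by definition. This yields the identity matrix.

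Next I would compute $\frac{d}{dt}\tilde{p}_{S,T}(t)|_{t=0}$ by cases on $|T \setminus S|$, matching each case against the three branches of $Q$. When $S \not\subseteq T$ the entry is identically zero, so its derivative vanishes, agreeing with $Q_{S,T} = 0$. When $S = T$, the entry telescopes to $\exp(-\lambda_{\exit(S)}\, t)$, whose derivative at $0$ is $-\lambda_{\exit(S)} = Q_{S,S}$. When $T \setminus S = \{j\}$ is a singleton, the Leibniz rule applied to $\theta_j(t)\prod_{i \in \exit(T)}(1-\theta_i(t))$ kills every term except the one differentiating $\theta_j$ (all others retain the factor $\theta_j(0)=0$), leaving $\theta_j'(0)\cdot 1 = \lambda_j = Q_{S,T}$. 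Finally, when $|T \setminus S| \geq 2$, each summand of the product-rule expansion differentiates exactly one factor and therefore still contains at least one undifferentiated $\theta_i$ with $\theta_i(0) = 0$, so the whole derivative is $0 = Q_{S,T}$.

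The argument is essentially routine bookkeeping with no serious obstacle; the conceptual content lies entirely in recognizing that ``first order approximation'' here means only agreement of value and first derivative at $t=0$, so the tangent of the exact transition semigroup $\exp(Qt)$ is all that must be reproduced. The subtlest step is the case $|T \setminus S| \geq 2$, where one must argue carefully that every term of the expansion retains a surviving zero factor; this is precisely what shows that the genuinely higher-order entries of $\tilde{p}(t)$ do not contribute at first order and thus match the zeros of the strictly upper-codimension-one pattern of $Q$.
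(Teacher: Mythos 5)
Your proposal is correct and follows essentially the same route as the paper's own proof: an entrywise verification of $\tilde{p}(0)=I$ and $\frac{d}{dt}\tilde{p}(t)|_{t=0}=Q$, split into the cases $S \not\subseteq T$, $S=T$, $|T\setminus S|=1$, and $|T\setminus S|\geq 2$, with the key observation that each factor $\theta_i(t)=1-\exp(-\lambda_i t)$ vanishes at $t=0$ while $\theta_i'(0)=\lambda_i$. The only cosmetic difference is that the paper differentiates the general $S\subset T$ entry in one formula before specializing, whereas you apply the Leibniz rule case by case; the substance is identical.
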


\begin{proof}
Note that diagonal entries of $\tilde{p}(t)$ have the form $\prod \exp( -\lambda_i t)$, and off-diagonal entries are either identically zero or are a product of terms, at least one of which is of the form $1 - \exp( - \lambda_i t) $.  This implies that $\tilde{p}(0) = I$.

If $T = S$, then $\widetilde{p}_{S,S}(t)  =  \prod_{i \in {\rm Exit}(S)}  \exp(-\lambda_i t)$.  So, by the product rule  
$$\frac{d}{dt} \widetilde{p}_{S,S}(t)  =   \sum_{i \in {\rm Exit}(S)} - \lambda_i  \widetilde{p}_{S,S}(t)$$
and thus    $\frac{d}{dt} \widetilde{p}_{S,S}(t) |_{t = 0}  = - \lambda_{{\rm Exit}(S)}  =  Q_{S,S}$.  If $S \subset T$ then 
$$\tilde{p}_{S,T}(t)  =  \prod_{i \in T \setminus S}  (1 - \exp(-\lambda_i t) ) \cdot  \prod_{i \in {\rm Exit}(T)}  \exp( - \lambda_i t) $$
and thus
$$\frac{d}{dt}  \tilde{p}_{S,T}(t)   =  \sum_{i \in T \setminus S}  \lambda_i  \frac{\exp( -\lambda_i t)}{ 1 - \exp( - \lambda_i t) }   \tilde{p}_{S,T}(t)  -  \sum_{i \in {\rm Exit}(T)} \lambda_i  \tilde{p}_{S,T}(t).$$

If $T \setminus S$ has cardinality greater than one, then $\frac{d}{dt}
\tilde{p}_{S,T}(t) |_{t = 0} =0  = Q_{S,T}$, since every term in the sum
involves at least one expression of the form $1 - \exp( - \lambda_i t)$.  On
the other hand, if $T = S \cup \{j\}$, then $\frac{d}{dt} \tilde{p}_{S,T}(t)
|_{t = 0} = \lambda_j  = Q_{S,T}$, since only the first term in the sum does
not contain an expression of the form $1 - \exp( -\lambda_i t)$.  As all other
entries in $\tilde{p}(t)$ and $Q$ are zero, this proves that $\widetilde{p}(t)$
is a first order approximation to $p(t)$.
\end{proof}

Given that the D-CBN is a first order approximation to the CT-CBN, 
it seems natural to conjecture that these two models are, in fact, equal to
each other. Indeed, if $P$ is the poset with no relations, 
it is easy to show that these two models coincide.  However, if $P$ contains 
at least one relation, the models are no longer the same, 
and the D-CBN is not even a second order approximation of the CT-CBN. 
This is illustrated in the following example.

\begin{ex}
Let $P$ be the poset on two elements with one relation $1 \prec 2$ and
fix the natural order $\emptyset$, $\{1\}$, $\{1,2\}$ in $J(P)$.  
If $\lambda_1 \neq \lambda_2$, then
$$
\tilde{p}(t)  =  \begin{pmatrix}
e^{-\lambda_1 t}  &  (1 - e^{ - \lambda_1 t} ) e^{ -\lambda_2 t}  &   (1 - e^{ - \lambda_1 t} ) (1 - e^{ - \lambda_2 t} )   \\
0 &  e^{- \lambda_2 t}  &  1 - e^{ - \lambda_2 t}  \\
0 & 0 & 1  \end{pmatrix}$$
and
$$
p(t)  =  \begin{pmatrix}
e^{- \lambda_1 t}   &    \frac{\lambda_1}{ \lambda_1 - \lambda_2} (e^{- \lambda_2 t} -  e^{- \lambda_1 t})
&  1  -  \frac{  \lambda_1 e^{-\lambda_2 t}  -  \lambda_2  e^{-\lambda_1t}}{ \lambda_1 - \lambda_2}  \\ 
  0 &  e^{- \lambda_2 t}  &  1 - e^{ - \lambda_2 t}  \\
0 & 0 & 1  \end{pmatrix}.
$$
In particular, $\frac{d^2}{dt^2} \tilde{p}_{\emptyset, \{1\}}(t) |_{t = 0}  =  -\lambda_1^2 - 2 \lambda_1 \lambda_2$,  whereas  $(Q^2)_{\emptyset, \{1\}}  =  -\lambda_1^2 -  \lambda_1 \lambda_2$.  \end{ex}

The discrepancies exhibited in the previous example become more dramatic as
the poset $P$ develops longer chains.  While the D-CBN is not identical to 
the CT-CBN, its nice properties can be exploited at various points during 
optimization.


\section{Censoring}

Our goal in this section is to define and analyze a censored CT-CBN model 
that we will apply to genetic data in Section~5.  The reason for introducing 
models with censoring is that we rarely know explicitly the time points
$t_1, \dots, t_n$ at which the events have occurred. Often we can only
measure, at a particular time, which of the events have occurred so far.  
It is natural to assume that the observation times are themselves random.  
For example, the evolutionary process leading to drug resistance in HIV
starts with the onset of therapy.  However, the genome of the virus can only 
be determined after viral rebound, i.e., after loss of viral supression,
which typically involves several mutations. Thus, the sampling time is
the time to therapy failure.

We introduce a new event $s$, such that the random variable $X_s$ is an
independent, exponentially distributed stopping time 
(or sampling time, or observation time), $X_s  \sim { \rm Exp}(\lambda_s)$.  
We define a new poset $P_s = P \cup \{s\}$ by adding to the poset of mutations
the element $s$, which has no relations with the other elements in $P$.  
In this setting, the events we observe consist of subsets $S \subseteq [n]$, 
which correspond to the event that $X_i < X_s$ for all $i \in S$ and 
$X_s < X_i$ for all $i \in [n] \setminus S$.

Thus, an observed set of events $S$ imposes extra relations 
$i \prec s$ for $i \in S$ and $s \prec i$ for $i \in [n] \setminus S$
on the poset $P_s$,
and we are led to study the poset refinements of $P_s$.
A poset $Q$ is said to refine the poset $P_s$ if every relation in $P_s$ 
also holds in $Q$.  A realization of the random vector $X$ is said to be 
compatible with $Q$, denoted $X \vdash Q$, if $X_i < X_j$ whenever 
$i \prec j$ in $Q$.  We can directly compute the probability of the event 
$X \vdash Q$ in terms of the distributive lattices $J(Q)$ and $J(P_s)$. 
Throughout this section, we abuse notation and say that $X \sim P_s$ if $X = (X_s, X_1, \ldots, X_n)$ is distributed according the CT-CBN associated to poset $P_s$ with parameter vector $\lambda = (\lambda_s, \lambda_1, \ldots, \lambda_n)$.

\begin{thm}\label{thm:prob}
The probability that $X \sim P_s$ is compatible with the poset $Q$ is given by
\begin{equation}\label{eq:prob}
\Prob(X \vdash Q) = 
 \lambda_s \lambda_1 \cdots \lambda_n \sum_{C \in \mathcal{C}(J(Q))}  \prod_{i=0}^{n}  \frac{1}{
  \lambda_{\exit(C_i)} },
\end{equation}
where the sum runs over all maximal chains in the distributive lattice $J(Q)$.
\end{thm}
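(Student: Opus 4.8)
The plan is to mirror the integration technique used in the proof of Theorem~\ref{thm:basicexpect}, which is in fact cleaner here because the integrand carries no $\max_i t_i$ weight and the integral therefore factors completely rather than splitting into a sum. By definition, $\Prob(X \vdash Q)$ is the integral of the CT-CBN density $f_{P_s,\lambda}$ over the region $R_Q = \{t \in \rr^{n+1}_{\geq 0} : t_i < t_j \mbox{ whenever } i \prec j \mbox{ in } Q\}$. I would first decompose $R_Q$ into the chambers cut out by the total orderings of the $n+1$ coordinates. A chamber meets $R_Q$ in positive measure exactly when its defining total order is a linear extension of $Q$, and these linear extensions are in bijection with the maximal chains $C = (C_0, \ldots, C_{n+1})$ of $J(Q)$. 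On each such chamber the density is nonzero (since $Q$ refines $P_s$, any linear extension of $Q$ is also one of $P_s$) and, crucially, every $\max_{j \in \rmpa(i)} t_j$ occurring in $f_{P_s,\lambda}$ resolves to a single coordinate $t_{j(i)}$, where $j(i)$ is the latest $P_s$-predecessor of $i$ along the chamber's order.

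On a fixed chamber I would apply the same unit-Jacobian change of coordinates as in Theorem~\ref{thm:basicexpect}, namely $u_0 = t_{\pi_1}$ and $u_k = t_{\pi_{k+1}} - t_{\pi_k}$, turning the domain into the positive orthant and the integrand into $\prod_i \lambda_i \exp(-\lambda_i \sum_{k \geq j(i)} u_k)$. Collecting the coefficient of each $u_k$ then identifies it with $\lambda_{\exit(C_k)}$: the rate $\lambda_i$ multiplies $u_k$ precisely when $i \notin C_k$ and all $P_s$-predecessors of $i$ lie in $C_k$, which is exactly the statement $i \in \exit(C_k)$. Here I would stress the one genuinely subtle point: the exit sets must be computed in $P_s$, not in $Q$, because the rates governing the waiting come from the $P_s$-density; this is consistent because each chain element $C_k \in J(Q) \subseteq J(P_s)$ is also an order ideal of $P_s$, so $\exit(C_k)$ is well defined. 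With the exponent diagonalized, the integral over the orthant factors as $\prod_{k=0}^{n} \int_0^\infty e^{-\lambda_{\exit(C_k)} u_k}\, du_k = \prod_{k=0}^{n} \lambda_{\exit(C_k)}^{-1}$, and pulling out the constant $\lambda_s \lambda_1 \cdots \lambda_n$ and summing over all chambers yields the claimed formula.

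The step I expect to be the main obstacle is the coefficient bookkeeping in the exponent together with the correct interpretation of $\exit$: one must argue carefully that the condition $j(i) \leq k < i$ along the linear extension is equivalent to $i \in \exit(C_k)$ computed in $P_s$, and not succumb to reading the exit set inside $J(Q)$, which would give the wrong answer already for a two-element chain. A quick sanity check, and indeed an alternative proof, comes from viewing the CT-CBN for $P_s$ as a continuous time Markov chain: from state $S$ the holding time is $\Exp(\lambda_{\exit(S)})$ and the jump to $S \cup \{j\}$ has probability $\lambda_j / \lambda_{\exit(S)}$, so the probability that the embedded jump chain traverses a given maximal chain $C$ of $J(Q)$ is $\prod_{i=0}^{n} \lambda_{C_{i+1}\setminus C_i}/\lambda_{\exit(C_i)}$. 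Since $X \vdash Q$ is exactly the event that the order of occurrence is a linear extension of $Q$, summing these trajectory probabilities over $\mathcal{C}(J(Q))$ and noting that the numerators multiply to $\lambda_s \lambda_1 \cdots \lambda_n$ reproduces Equation~\ref{eq:prob} and confirms that the exits are taken in $P_s$.
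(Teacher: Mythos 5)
Your main argument is correct and is essentially the paper's own proof: the paper likewise writes $\Prob(X \vdash Q)$ as $\int \mathbb{I}_Q(t) f(t)\, dt$, decomposes over the linear extensions of $Q$ (equivalently the maximal chains of $J(Q)$), applies the same unit-Jacobian substitution $u_0 = t_{\pi_1}$, $u_k = t_{\pi_{k+1}} - t_{\pi_k}$, and reads off the coefficient of each $u_k$ as $\lambda_{\exit(C_k)}$ with $\exit$ taken in $P_s$ --- the very point the paper flags in the remark following the theorem. Your closing observation via the embedded jump chain of the continuous time Markov chain (holding times $\Exp(\lambda_{\exit(S)})$, jump probabilities $\lambda_j/\lambda_{\exit(S)}$, numerators telescoping to $\lambda_s\lambda_1\cdots\lambda_n$) is a correct and pleasantly short independent verification that does not appear in the paper.
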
  

\begin{rmk}
Note that in this formula, and in all formulas throughout this section, the expression ${\rm Exit}(S)$ always refers to the underlying poset $P_s$ and not to the refinement $Q$.
\end{rmk}

\begin{proof}
We must compute the integral 
$$ \int_{\rr^{n+1}_{\geq 0} }    \mathbb{I}_Q(t) \cdot f(t)  dt$$
where $\mathbb{I}_Q(t)$ is the indicator  function of compatibility with the poset $Q$.  The integral breaks up into a sum over the linear extensions of the poset $Q$ over regions over the form $t_{\sigma_0} < t_{\sigma_1} < \cdots < t_{\sigma_n}$.  Without loss of generality and after renaming the elements of the poset $P_s$, we can assume that the linear extension of interest is $0 \prec 1 \prec \cdots \prec n$.  We must calculate the integral
$$\int_{t_0 = 0}^\infty \int_{t_1 = t_0}^\infty \cdots  \int_{t_n = t_{n-1}}^\infty  f(t)  dt$$
where over the restricted region, the integrand has the form
$$ f(t)  =   \prod_{i =0}^n   \lambda_i    \exp (  - \lambda_{i}(t_{i}  - t_{j(i)})). $$
Taking the usual change of variables $u_{0}  =  t_0$  and $u_{i+1}  = t_{i+1} - t_i$ for $i = 1, \ldots, n$ we see that the integral becomes
$$  \prod_{i = 0}^n    \int_{ u_{i} = 0}^\infty   \exp(  -\lambda_{{\rm Exit}(C_i) } u_i) du_i $$
where $C_i  =  \{0,1, \ldots, i-1 \}$.    This yields the desired contribution to the integral.
\end{proof}

\begin{ex}
Let $P$ be the poset from Example~\ref{ex:running} with relations 
$1 \prec 3$, $2 \prec 3$, and $2 \prec 4$, and consider the extended
poset $P_s = P \cup \{s\}$ with no additional relations.  
Suppose we want to calculate the probability of precisely mutations $2$ and $4$ 
occurring before measurement.  The refinement $Q_{2,4}$ corresponding to the
genotype $\{2, 4\}$ is a chain $2 \prec 4 \prec s \prec 1 \prec 3$, 
and so the distributive lattice $J(Q_{2,4})$ is also a chain.  From
Equation~\ref{eq:prob} we see that  
$$\Prob(X \vdash Q_{2,4})  =     \lambda_1 \lambda_2 \lambda_3 \lambda_4  \lambda_s  \frac{1}{\lambda_1 + \lambda_2 + \lambda_s}  \frac{1}{\lambda_1 + \lambda_4 + \lambda_s}  \frac{1}{\lambda_1 + \lambda_s} \frac{1}{ \lambda_1}  \frac{1}{\lambda_3}.$$
On the other hand, the distributive lattice $J(Q_{1,2})$ has four chains and $\Prob(X \vdash Q_{1,2})$ is the sum of four terms of product form.  The terms in the sum have common factors, and this expression can be rewritten as
$$\Prob(X \vdash Q_{1,2})  =  \frac{\lambda_1 \lambda_2 \lambda_3 \lambda_4 \lambda_s}{
(\lambda_1 + \lambda_2 +  \lambda_s)( \lambda_3+ \lambda_4+ \lambda_s)( \lambda_3+ \lambda_4)}  \left(  \frac{1}{ \lambda_2+ \lambda_s} +  \frac{1}{ \lambda_1+ \lambda_4 +  \lambda_s}  \right)  \left(  \frac{1}{ \lambda_3}  + \frac{1}{ \lambda_4}  \right).$$
The method for building  the expression for this probability recursively is explained in Proposition \ref{prop:dynamic}. \qed
\end{ex} 

Given a poset $P$ and a set of parameters $\lambda_1, \ldots, \lambda_n$,
and $\lambda_s$, we obtain a probability distribution 
$p(P, \lambda)  \subseteq  \Delta_{2^n}$, the $2^n -1$ dimensional probability
simplex.  The set of all such probability distributions is the discrete
censored CT-CBN.   Although there are $n+1$ parameters specifying the model, 
it is easy to see that the family of probability distributions that can arise 
has dimension $n$.  The loss of dimensionality arises from the fact that 
$p(P, \lambda)  = p(P, t \lambda)$.  This is an example of an algebraic statistical model \citep{Pachter2005}, since the probability $p(P,\lambda)$ is a rational function of the parameters $\lambda$.  Unfortunately, these models seem difficult to analyze using techniques from algebraic geometry.  Indeed, even the model associated to the poset with no relations, corresponding to independently occurring mutations, lacks a simple description as an algebraic statistical model.

Unlike in the fully observed CT-CBN or the D-CBN, we have found no general closed form 
expressions for the MLEs of the parameters of the censored CT-CBN.  
However, as the censored model is a marginalization of the CT-CBN and 
the CT-CBN is a regular exponential family, we can use the EM algorithm 
to find ML estimates (see Chapter 8 of \citet{Little2002} for a general description of the EM algorithm).  While the EM algorithm is only 
guaranteed to find a local maximum of the likelihood function, our
computational experience has been that using exact MLEs from the D-CBN 
for $\theta_i$ and solving for $\lambda_i$ in the approximate expression 
$\theta_i = \lambda_{i} / (\lambda_i + \lambda_s)$ works as a good 
starting guess for the EM algorithm.

In the EM algorithm for a marginalization of a regular exponential family, 
we start with a guess for the ML parameters $\lambda^*$.  Then, given the
data, we compute the expected values for the missing sufficient statistics 
of the fully observed regular exponential family.  In our setting, 
we need to compute, for each $S \subseteq [n]$ that is observed, 
and for each $i \in [n]$, the expected value
\begin{equation}  \label{eq:exp}
   \bbe[X_i - \max_{j \in {\rm pa}(i)} X_j \mid X \vdash Q_S].
\end{equation}
This is the \emph{E-step} of the EM algorithm.
The expected sufficient statistics are then used to compute MLEs for 
$\lambda$ in the fully observed CT-CBN.  This is the \emph{M-step} 
of the EM algorithm.  The EM-algorithm iterates alternations of the E-step and
the M-step.  After each iteration, the likelihood function is guaranteed to
increase.  A fixed point of the EM algorithm must be a critical point of the
likelihood function.  Running the EM-algorithm with many starting points is a
useful heuristic for calculating maximum likelihood estimates.  

Since the M-step in our EM algorithm is trivial to calculate from 
Proposition~\ref{prop:ml}, the only thing remaining to compute is the
expected value (\ref{eq:exp}) .  The formula is similar to the one that appears in
Theorem~\ref{thm:basicexpect}.

\begin{thm}\label{thm:Estep}
The expected value of $X_i - \max_{j \in {\rm pa}(i)} X_j$ given that $X
$ is compatible with $Q$ is 
$$
\bbe[ X_i - \max_{j \in {\rm pa}(i)} X_j |  X \vdash Q]  =  \frac{\lambda_s\lambda_1 \cdots \lambda_n}{\Prob(X \vdash Q)}    \sum_{C \subset \mathcal{C}(J(Q))}  \left( \left(    \prod_{k = 0}^{n}  \frac{1}{ \lambda_{\exit(C_k)} } \right) \cdot  \left( \sum_{k =0}^{n}  \frac{ \iota(i, C_k)}{\lambda_{\exit(C_k)}} \right) \right)
$$
where the first sum is over all maximal chains in $J(Q)$ and
$$\iota(i,C_k) =  \left\{  \begin{array}{cl}   
1 &  \mbox{ if  } i \notin C_k \mbox{ and }  {\rm pa}(i)  \subseteq C_k \\
0 &  \mbox{ otherwise}.  \end{array}  \right. $$
\end{thm}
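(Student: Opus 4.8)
The plan is to follow the decompose-then-change-coordinates strategy already used for Theorem~\ref{thm:basicexpect} and Theorem~\ref{thm:prob}. By the definition of conditional expectation,
$$\bbe[X_i - \max_{j \in \rmpa(i)} X_j \mid X \vdash Q] = \frac{1}{\Prob(X \vdash Q)} \int_{\rr^{n+1}_{\geq 0}} \Bigl(t_i - \max_{j \in \rmpa(i)} t_j\Bigr) \, \mathbb{I}_Q(t) \, f(t) \, dt,$$
and the denominator is already furnished by Theorem~\ref{thm:prob}. So the whole task reduces to the numerator integral, which I would attack exactly as before: split $\rr^{n+1}_{\geq 0}$ into the regions $t_{\sigma_0} < \cdots < t_{\sigma_n}$ indexed by permutations $\sigma$, note that $\mathbb{I}_Q(t)$ annihilates every region except those whose order is a linear extension of $Q$ (equivalently a maximal chain $C \in \mathcal{C}(J(Q))$), and on each surviving region perform the unit-Jacobian substitution $u_0 = t_0$, $u_k = t_k - t_{k-1}$. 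After renaming so the extension reads $0 \prec 1 \prec \cdots \prec n$, the density collapses to the product form $\lambda_s \lambda_1 \cdots \lambda_n \prod_{k=0}^n \exp(-\lambda_{\exit(C_k)} u_k)$ with $C_k = \{0, 1, \ldots, k-1\}$, precisely as in the proof of Theorem~\ref{thm:prob}.

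The one genuinely new ingredient, and the step I expect to be the crux, is rewriting the prefactor $t_i - \max_{j \in \rmpa(i)} t_j$ in the $u$-coordinates. On a fixed region the times increase with the index, so the maximum is attained at $t_{j(i)}$, where $j(i)$ is the largest predecessor of $i$ in $P_s$ (with $t_{j(i)} := 0$ when $\rmpa(i) = \emptyset$), and the prefactor telescopes to $\sum_{k=j(i)+1}^{i} u_k$. The combinatorial claim I would then verify is that the coefficient of $u_k$ here is exactly $\iota(i, C_k)$: with $C_k = \{0, \ldots, k-1\}$ one has $i \notin C_k$ iff $k \leq i$, and $\rmpa(i) \subseteq C_k$ iff $j(i) \leq k-1$, so $\iota(i, C_k) = 1$ precisely when $j(i)+1 \leq k \leq i$, matching the index range of the telescoped sum. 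Hence $t_i - \max_{j \in \rmpa(i)} t_j = \sum_{k=0}^n \iota(i, C_k) u_k$ on this region. It is essential here, as the Remark emphasizes, that $\rmpa$ and $\exit$ refer to $P_s$ throughout, not to the refinement $Q$.

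With this identity the rest is routine. By linearity I would split the region integral into a sum over $k$ weighted by $\iota(i, C_k)$; in the $k$-th summand the integrand is $u_k \prod_{l=0}^n \exp(-\lambda_{\exit(C_l)} u_l)$, which factors over the product domain. The integrals in the variables $u_l$ with $l \neq k$ each contribute $1/\lambda_{\exit(C_l)}$, while $\int_0^\infty u_k \exp(-\lambda_{\exit(C_k)} u_k) \, du_k = 1/\lambda_{\exit(C_k)}^2$ supplies one extra factor of $1/\lambda_{\exit(C_k)}$, so the $k$-th summand equals $\bigl(\prod_{l=0}^n \tfrac{1}{\lambda_{\exit(C_l)}}\bigr) \tfrac{1}{\lambda_{\exit(C_k)}}$. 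Summing over $k$, then over all chains $C \in \mathcal{C}(J(Q))$, and dividing by $\Prob(X \vdash Q)$ produces the claimed formula. The only bookkeeping to watch is that the products and sums run over $C_0, \ldots, C_n$ only---the top set $P_s$ has empty exit and is excluded---and that $\lambda_0 \cdots \lambda_n$ is $\lambda_s \lambda_1 \cdots \lambda_n$ after the renaming.
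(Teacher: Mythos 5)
Your proposal is correct and follows essentially the same route as the paper's own proof of Theorem~\ref{thm:Estep}: the same reduction of the conditional expectation to an integral, the same decomposition over linear extensions of $Q$ (maximal chains in $J(Q)$), the same unit-Jacobian change of coordinates, and the same identification of the prefactor's $u_k$-coefficient with $\iota(i,C_k)$, i.e.\ with the condition $i \in \exit(C_k)$. Your explicit index check that $\iota(i,C_k)=1$ precisely for $j(i)+1 \le k \le i$, and your remark about excluding the top element of the chain, merely spell out details the paper states tersely.
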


\begin{proof}
The proof follows the same basic outline of the proof of Theorem \ref{thm:basicexpect}.
The expected value is
$$
\bbe[ X_i - \max_{j \in {\rm pa}(i)} X_j |  X \vdash Q]  =   \frac{1}{{\rm Prob}(X  \vdash Q) }  \int_{\rr^{n+1}_{\geq 0}}     (t_i -  \max_{j \in {\rm pa}(i)} t_j) \cdot  \mathbb{I}_Q(t) \cdot f(t)  dt.
$$
We can calculate the integral by decomposing it into a sum over the linear extensions of $Q$, i.e., the chains in the distributive lattice $J(Q)$.  Without loss of generality, we can suppose that the linear extension is called $0 \prec 1 \prec \cdots \prec n$.  For this linear extension, the integral becomes
$$
\int_{t_0 = 0}^ \infty  \int_{t_1 =  t_0}^\infty  \cdots  \int_{t_n = t_{n-1}}^\infty  (t_i -  t_{j(i)}) f(t)  dt
$$
and over this region $f(t)$ has the form
$$ f(t)  =   \prod_{k =0}^n   \lambda_k    \exp (  - \lambda_{k}(t_{k}  - t_{j(k)})). $$
Applying the usual change of coordinates, we can rewrite this integral in product form as
$$
\int_{u_0 = 0}^\infty \int_{u_1 = 0}^\infty  \cdots  \int_{u_{n} = 0}^\infty  (u_{i-1}  +  \cdots + u_{j(i)} )  \prod_{k =0}^n  \lambda_k  \exp(-\lambda_{k} (u_{k-1} +  u_{k-2}  + \cdots + u_{j(k)}  ) )  \, du. 
$$
Breaking this integral up as a sum, yields a collection of integrals we have already computed in the proof of Theorem \ref{thm:basicexpect}.  However, each term 

$$ \left( \prod_{k = 0}^{n}  \frac{1}{ \lambda_{\exit(C_k)} }  \right)  \cdot \frac{1}{ \lambda_{\exit(C_k)} }$$
contributes to the sum if and only if $i \in {\rm Exit}(C_k)$, i.e., if and only if $\iota(i,C_k) = 1$.
\end{proof}

Rather than computing the expectation from Theorem~\ref{thm:Estep} by
explicitly listing all maximal chains in the distributive lattice $J(Q)$,
the expected value can be computed recursively, by summing up the
distributive lattice.  This approach is sometimes referred to as \emph{dynamic
programming}.  It reduces the computational burden of computing the
expectation, because one need not enumerate all the chains in $J(Q)$.

\begin{prop}\label{prop:dynamic}
For each $S \in J(Q)$ define $P_S$ and $E^i_S$ by the formulas
$$P_S  =  \sum_{j \in S :  S \setminus \{j \} \in J(Q)}  \frac{ \lambda_j}{ \lambda_{\exit(S \setminus \{j \})}}  P_{S \setminus \{j \}}$$
$$E^i_S  =  \sum_{j \in S :  S \setminus \{j \} \in J(Q)}   \left( \frac{ \lambda_j}{ \lambda_{\exit(S \setminus \{j \})}}  E^i_{S \setminus \{j \}}  + \iota(i, S \setminus \{j\})  \frac{ \lambda_j}{ \lambda^2_{\exit(S \setminus \{j \})}}  P_{S \setminus \{j \}}  \right),$$
subject to the initial conditions $P_\emptyset = 1$ and $E^i_\emptyset = 0$, where 
$$\iota(i, S \setminus \{j\}) =  \left\{  \begin{array}{cl}   
1 &  \mbox{ if  } i \notin S \setminus \{j \}  \mbox{ and }  {\rm pa}(i)  \subseteq S \setminus \{j \}  \\
0 &  \mbox{ otherwise}.  \end{array}  \right. $$
Then $\Prob(X  \vdash Q)  =  P_{\{s \}  \cup [n]}$ and 
$\bbe[X_i - \max_{j \in {\rm pa}(i)}  X_j \,  | \,  X \vdash Q]  =  E^i_{\{s \} \cup [n]} \big/ P_{\{s \} \cup [n]}$.
\end{prop}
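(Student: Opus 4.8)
The plan is to prove, by induction up the lattice $J(Q)$, closed-form expressions for the auxiliary quantities $P_S$ and $E^i_S$ as \emph{partial} versions of the sums in Theorems~\ref{thm:prob} and~\ref{thm:Estep}, in which the sum over all maximal chains of $J(Q)$ is replaced by a sum over saturated chains running only from $\emptyset$ up to $S$. Writing $\mathcal{C}(\emptyset,S)$ for the set of chains $\emptyset = C_0 \subset C_1 \subset \cdots \subset C_{|S|} = S$ in $J(Q)$, I claim that
$$P_S = \Big( \prod_{j \in S} \lambda_j \Big) \sum_{C \in \mathcal{C}(\emptyset,S)} \prod_{k=0}^{|S|-1} \frac{1}{\lambda_{\exit(C_k)}}$$
and
$$E^i_S = \Big( \prod_{j \in S} \lambda_j \Big) \sum_{C \in \mathcal{C}(\emptyset,S)} \Big( \prod_{k=0}^{|S|-1} \frac{1}{\lambda_{\exit(C_k)}} \Big) \Big( \sum_{k=0}^{|S|-1} \frac{\iota(i,C_k)}{\lambda_{\exit(C_k)}} \Big),$$
where throughout $\exit$ refers to the underlying poset $P_s$; this is legitimate because $Q$ refines $P_s$ forces $J(Q) \subseteq J(P_s)$, so every $C_k$ is an order ideal of $P_s$ as well. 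Granting these two identities, the proposition follows by specializing to the top element $S = \{s\} \cup [n]$: then $\prod_{j \in S} \lambda_j = \lambda_s \lambda_1 \cdots \lambda_n$, the chains $\mathcal{C}(\emptyset,S)$ are exactly all of $\mathcal{C}(J(Q))$, and $|S|-1 = n$, so the two displays reduce verbatim to the formula of Theorem~\ref{thm:prob} and to the numerator of Theorem~\ref{thm:Estep}. Hence $P_{\{s\}\cup[n]} = \Prob(X \vdash Q)$ and $E^i_{\{s\}\cup[n]} \big/ P_{\{s\}\cup[n]}$ is the claimed conditional expectation.

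The inductive base is $S = \emptyset$, where the empty product and empty sum give $P_\emptyset = 1$ and $E^i_\emptyset = 0$, matching the stated initial conditions. For the inductive step for $P_S$, observe that every chain in $\mathcal{C}(\emptyset,S)$ reaches $S$ from a unique penultimate ideal $C_{|S|-1} = S \setminus \{j\}$, where $j$ ranges over precisely those elements with $S \setminus \{j\} \in J(Q)$, i.e. the maximal elements of $S$ in $Q$. Grouping the chains by this final covering step and factoring out the last factor $1/\lambda_{\exit(S\setminus\{j\})}$, the partial-chain sum splits as $\sum_j \frac{1}{\lambda_{\exit(S\setminus\{j\})}} \sum_{C' \in \mathcal{C}(\emptyset,\,S\setminus\{j\})} \prod_{k=0}^{|S|-2} \frac{1}{\lambda_{\exit(C'_k)}}$. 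Multiplying by $\prod_{j \in S}\lambda_j$ and peeling off the single factor $\lambda_j$ to match $\prod_{k \in S\setminus\{j\}}\lambda_k$ recovers $P_{S\setminus\{j\}}$ by the inductive hypothesis, which is exactly the stated recursion for $P_S$.

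The recursion for $E^i_S$ is the crux, and I expect it to be the main obstacle, because its summand is a \emph{product} of a weight $W(C) = \prod_k 1/\lambda_{\exit(C_k)}$ and an indicator sum $A(C) = \sum_k \iota(i,C_k)/\lambda_{\exit(C_k)}$, so peeling off the last covering step does not simply factor. With the penultimate ideal $S\setminus\{j\}$ and the truncated chain $C'$, the summand equals $\frac{1}{\lambda_{\exit(S\setminus\{j\})}} W(C') \big( \frac{\iota(i,S\setminus\{j\})}{\lambda_{\exit(S\setminus\{j\})}} + A(C') \big)$. Expanding this product separates it into a piece proportional to $W(C')A(C')$, whose chain-sum rebuilds $E^i_{S\setminus\{j\}}$, and a piece proportional to $\iota(i,S\setminus\{j\}) W(C')/\lambda_{\exit(S\setminus\{j\})}$, whose chain-sum rebuilds $P_{S\setminus\{j\}}$ and carries the extra factor of $1/\lambda_{\exit(S\setminus\{j\})}$ responsible for the $\lambda^2_{\exit(S\setminus\{j\})}$ denominator. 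Multiplying by $\prod_{j \in S}\lambda_j$ and again peeling off $\lambda_j$ turns these two pieces into precisely the two summands of the stated recursion for $E^i_S$. The delicate bookkeeping is to confirm that the second factor of $1/\lambda_{\exit(S\setminus\{j\})}$ and the indicator $\iota(i,S\setminus\{j\})$ attach to the $P$-term rather than the $E$-term; everything else is the same telescoping of chain sums used for $P_S$.
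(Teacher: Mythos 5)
Your proof is correct and takes essentially the same route as the paper's: the paper also establishes closed forms for $P_S$ and $E^i_S$ as chain sums (phrased via maximal chains of $J(Q|_S)$, which coincide with your saturated chains from $\emptyset$ to $S$ in $J(Q)$, since the interval $[\emptyset,S]$ in $J(Q)$ is exactly $J(Q|_S)$), verifies the recurrences by grouping chains according to the penultimate ideal $S\setminus\{j\}$, and then specializes to the top element to invoke Theorems~\ref{thm:prob} and~\ref{thm:Estep}. The only difference is that you carry out in detail the splitting of the $E^i_S$ summand into the $W(C')A(C')$ and $\iota(i,S\setminus\{j\})\,W(C')/\lambda^2_{\exit(S\setminus\{j\})}$ pieces, which the paper dismisses as ``straightforward.''
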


\begin{proof}
 Both results follow from writing down a closed form for $P_S$ and $E^i_S$, proving that these formulas hold inductively, and showing that $P_{\{s \} \cup [n]}  =  \Prob(X \vdash Q)$ and $ \frac{E^i_{\{s \} \cup [n]}}{P_{\{s \} \cup [n]}}  = \bbe[X_i - \max_{j \in {\rm pa}(i)}  X_j \,  | \,  X \vdash Q]  $.

To this end, let $Q|_S$ be the induced subposet of $Q$ with element set $S$.  Then
$$P_S  =  \prod_{i \in S} \lambda_i 
\sum_{C \in \mathcal{C}(J(Q_S))}  \prod_{i=0}^{|S| -1}  \frac{1}{
  \lambda_{\exit(C_i)} }$$
with $P_\emptyset = 1$.  The recurrence 
$$P_S  =  \sum_{j \in S :  S \setminus \{j \} \in J(Q)}  \frac{ \lambda_j}{ \lambda_{\exit(S \setminus \{j \})}}  P_{S \setminus \{j \}}$$
is satisfied because every maximal chain in $J(Q|_S)$ comes from a maximal chain in exactly one of the $J(Q|_{S \setminus \{j \}})$ by adding $j$ to the poset $Q|_{S \setminus \{j \}}$ as the last element.  Also, $P_{\{s\} \cup [n]}$ has the desired form.

Similarly, it is straightforward to show that 
$$E^i_S  = 
\prod_{k \in S}  \lambda_k    \sum_{C \subset \mathcal{C}(J(Q_S))}  \left( \left(    \prod_{k = 0}^{|S|-1}  \frac{1}{ \lambda_{\exit(C_k)} } \right) \cdot  \left( \sum_{k =0}^{|S|-1}  \frac{ \iota(i, C_k)}{\lambda_{\exit(C_k)}} \right) \right)$$
which, together with $P_S$, above, satisfies the desired recurrence relation.
\end{proof}


\section{Applications}

\begin{table}[h!]
  \centering \footnotesize
  \begin{tabular}{llrrl}
    \hline
    \bf Biological system & \bf Genetic events & \bf \# & \bf Samples & \bf Ref. \\ \hline
    Prostate cancer & Chromosomal gains and losses & 9 & 54 & \cite{Rahnenfuehrer2005} \\
    Colon cancer & Mutated genes & 12 & 35 & \cite{Sjoeblom2006} \\
    Breast cancer & Mutated genes & 9 & 42 & \cite{Sjoeblom2006} \\
    HIV drug resistance & Amino acid changes in the HIV RT & 7 & 364 & \cite{Beerenwinkel2005a} \\ \hline
    ~ \\
  \end{tabular}
  \caption{Applications of the CT-CBN model to genetic data. For each biological system
  (first column), the nature (second column) and the number (third column) 
  of the genetic alterations, the number of observations (fourth column), and
  a reference pointing to the original study (last column) is shown.
  }
  \label{tab:applications}
\end{table}

In this Section, we use the CT-CBN model to describe the accumulation of
mutations in four different biological systems (Table~\ref{tab:applications}). 
The random variables $X_i$ denote the times of fixation of genetic changes 
in a population of individuals. The definition of a genetic change is 
different in each example, depending on both the nature of the 
evolutionary process and the technology to detect genetic alterations.
For all examples, the data is a list of genotypes that have been observed 
after an unknown sampling time assumed to be exponentially distributed. 
Thus we apply the censored CT-CBN model.
Our goal is to learn the structure of mutational pathways, which is
represented by the linear extensions of the CT-CBN defining posets.

Theorem~\ref{thm:MLposet} states that the structure of the ML CT-CBN model
is given by the maximal poset that is compatible with the observed data.
In practice, however, the observations are subject to noise, either
due to deviations of the data generating process from the model, 
or due to technical limitations in assessing genetic changes.
Thus, for most biomedical data sets, the ML poset will have very few
relations, although a large portion of the observations might
support more order constraints. We adress this problem following
the approach outlined in \citet{Beerenwinkel2006e} for the discrete CBN. 

Consider a family of posets $P_{\epsilon}$ ($0 \le \epsilon \le 1$),
each of which is maximal with the property that a fraction $\epsilon$ 
of the data is allowed to be incompatible with the poset. We assume that
the incompatible genotypes are generated with uniform probility
$q_{\epsilon} = 1/(2^n - |J(P_{\epsilon})|)$ and consider
the extended model with probabilities
\[
   \Pr(X_{\epsilon} \vdash Q \mid \alpha,\lambda) = 
   \begin{cases}
     \alpha \, \Pr(X_{\epsilon} \vdash Q \mid \lambda) & \text{if $Q$ refines $P_{\epsilon}$} \\
     (1 - \alpha) \, q_{\epsilon} & \text{else},
   \end{cases}
\]
where $X_{\epsilon} \sim P_{\epsilon}$ and 
$\alpha = \sum_{g \in J(P_{\epsilon})} u_g \big/ \sum_{g \in 2^{[n]}} u_g$ 
denotes the fraction of the data that are compatible with
$P_{\epsilon}$. The model can also be interpreted as a mixture model
with $\alpha$ the ML estimate of the mixing parameter
\citep[Prop.~8]{Beerenwinkel2006e}. In the applications, we construct
several posets $P_{\epsilon}$ for various values of $\epsilon$
and select the poset that maximizes the likelihood of the extended model.

\smallskip

In the first application, we analyze data from comparitive genome hybridization
(CGH) experiments. This technique detects large scale genomic alterations,
namely the gain or loss of chromosome arms, that occur frequently in cancer 
cells. For example, the event $4q+$ denotes the gain (+) of additional copies
of the large ($q$) arm of chromosome 4. Likewise, $8p-$ refers to the loss ($-$)
of the small arm ($p$) of chromosome 8. We consider 54 prostate cancer samples,
each defined by the presence or absence of the nine alterations
$3q+$, $4q+$, $6q+$, $7q+$, $8p-$, $8q+$, $10q-$, $13q+$, and $Xq+$ 
as defined in \citet{Rahnenfuehrer2005}. 

\begin{figure}[h!]
  \centering
  \includegraphics[width=0.7\textwidth]{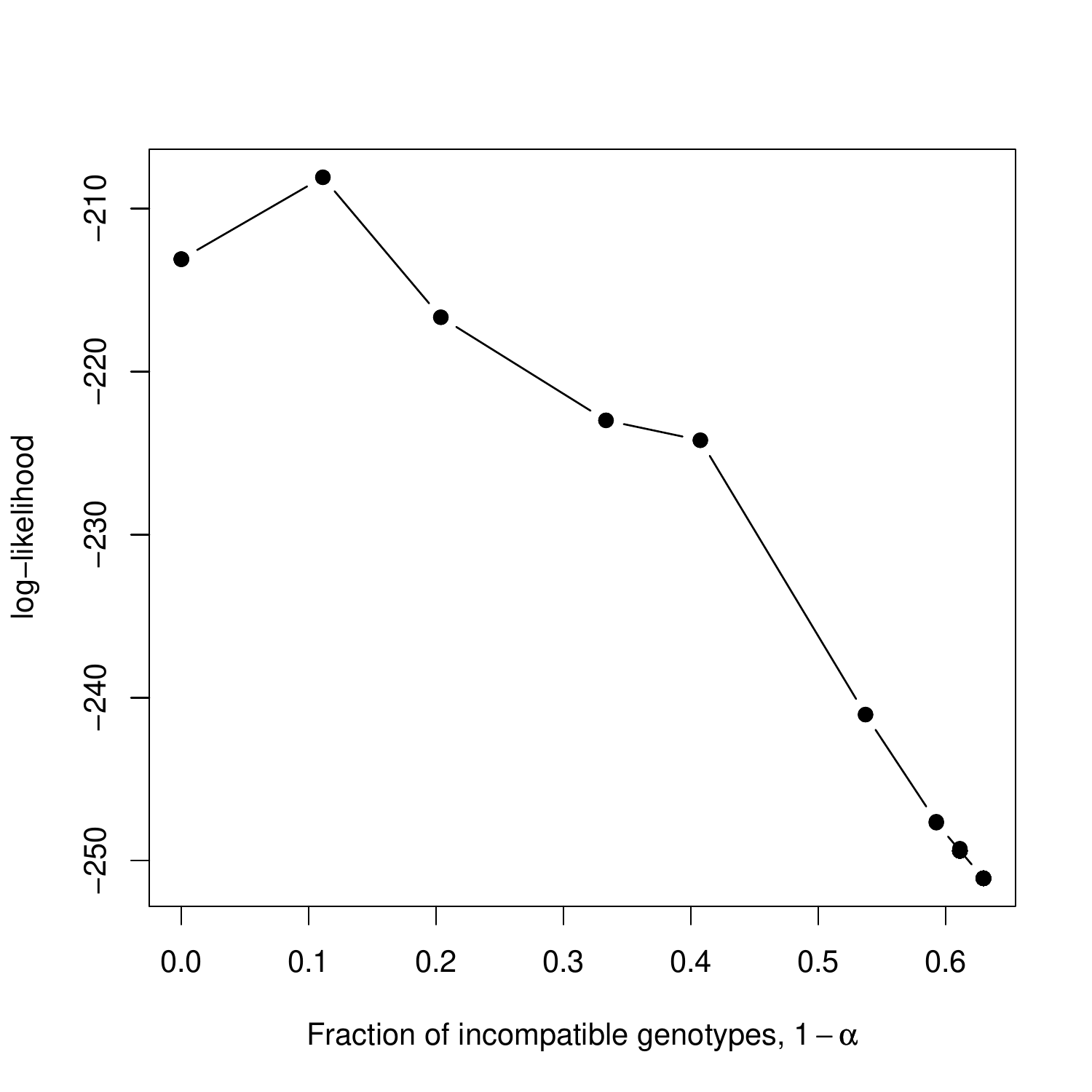}
  \caption{Maximum likelihood estimation of the discrete censored 
  CT-CBN model for the prostate cancer data.The log-likelihood is displayed 
  as a function of the fraction of data that is incompatible with the poset. 
  The curve has been generated by densely sampling $\epsilon$ from the
  unit interval and estimation of the extended models $P_\epsilon$.
  }
  \label{fig:pk}
\end{figure}

\begin{figure}[h!]
  \centering
  \makebox{
    \xymatrix@!=0.2cm{
      4q+\ar@{->}[d] &  \\
      8q+\ar@{->}[d]\ar@{->}[dr] & 6q+\ar@{->}[d]  \\
      13q+ & 3q+ & 7q+ & 8p- & 10q- & Xq+
    } 
  }
  \caption{Optimal prostate cancer poset corresponding to the maximum in Figure~\ref{fig:pk}.
  An arrow $p \rightarrow q$ between two genetic events represents the cover relation 
  $p \prec q$ in the Hasse diagram of the poset.
  }
  \label{fig:pk-poset}
\end{figure}
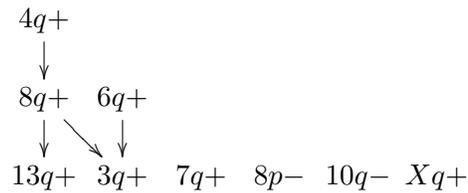

In Figure~\ref{fig:pk}, the log-likelihood is shown as a function
of the fraction of incompatible genotypes. The poset that maximizes
the likelihood explains 89\% of the data and is displayed in 
Figure~\ref{fig:pk-poset}. Four of the nine genetic changes do 
not obey any relation, two events have one predecessor, and one event 
occurs only after two parent events have occured. Note that the second
best poset is the empty poset corresponding to $\alpha = 0$.

\smallskip

In our second and third example, we consider mutation data from
35 colon and 42 breast cancer tumors, respectively
\citep{Sjoeblom2006}. Here, a genetic event is an unspecific
mutation in a gene that has been detected by DNA sequencing. 
Out of the $\sim\!200$ genes identified by \citet{Sjoeblom2006}
we considered those that were mutated in at least four tumors.
For colon cancer, this set comprises
{\em ADAMTSL3}, {\em APC}, {\em EPHA3}, {\em EPHB6}, {\em FBXW7}, 
{\em KRAS}, {\em MLL3}, {\em OBSCN}, {\em PKHD1}, {\em SMAD4}, 
{\em SYNE1}, and {\em TP53}, 
while for breast cancer, we identified
{\em ATP8B1}, {\em CUBN}, {\em FLJ13479}, {\em FLNB}, {\em MACF1}, 
{\em OBSCN}, {\em SPTAN1}, {\em TECTA}, and {\em TP53}.

\begin{figure}[h!]
  \centering
  \includegraphics[width=0.7\textwidth]{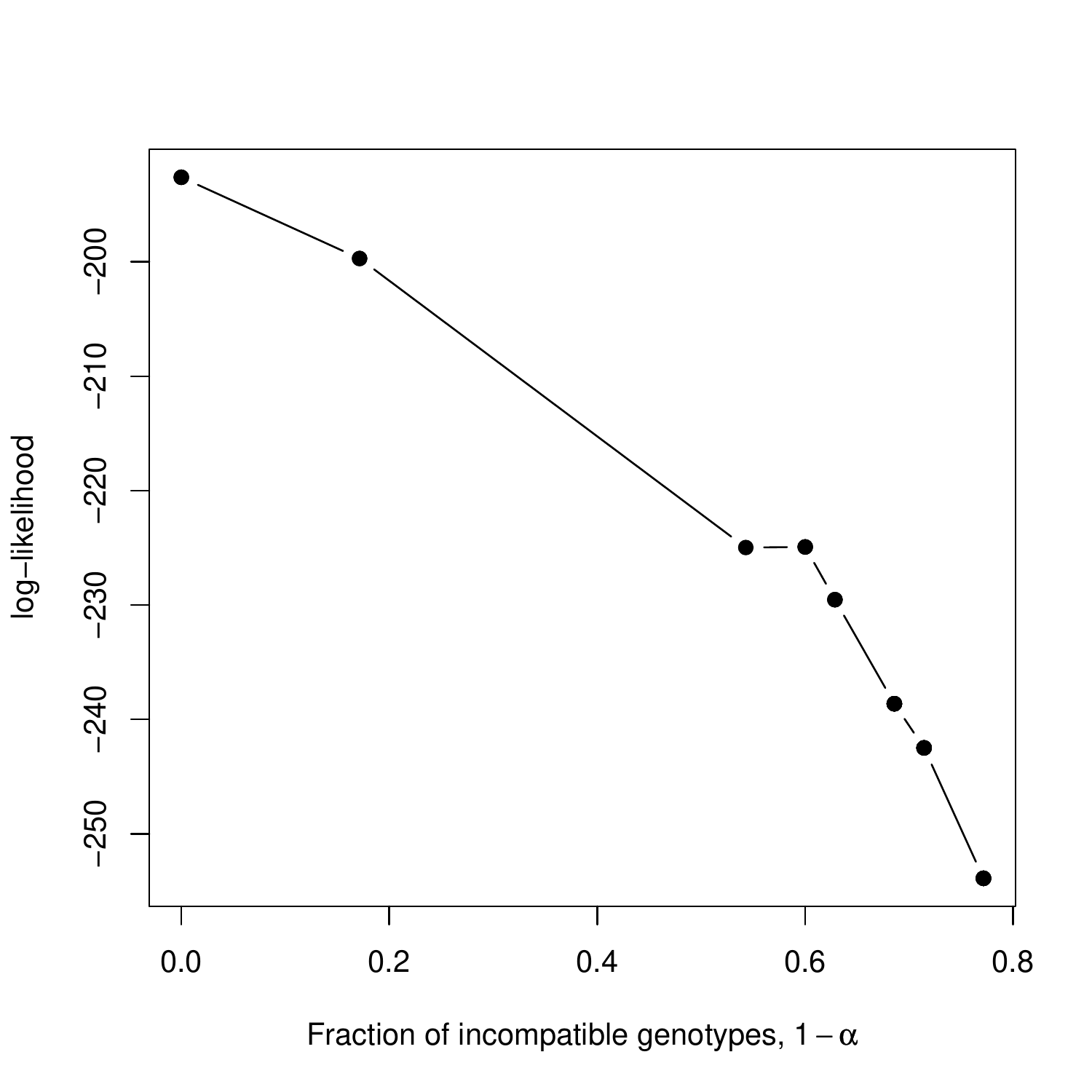}
  \caption{ML estimation of the CT-CBN model for the colon cancer data.
  }
  \label{fig:colon}
\end{figure}

\begin{figure}[h!]
  \centering
  \includegraphics[width=0.7\textwidth]{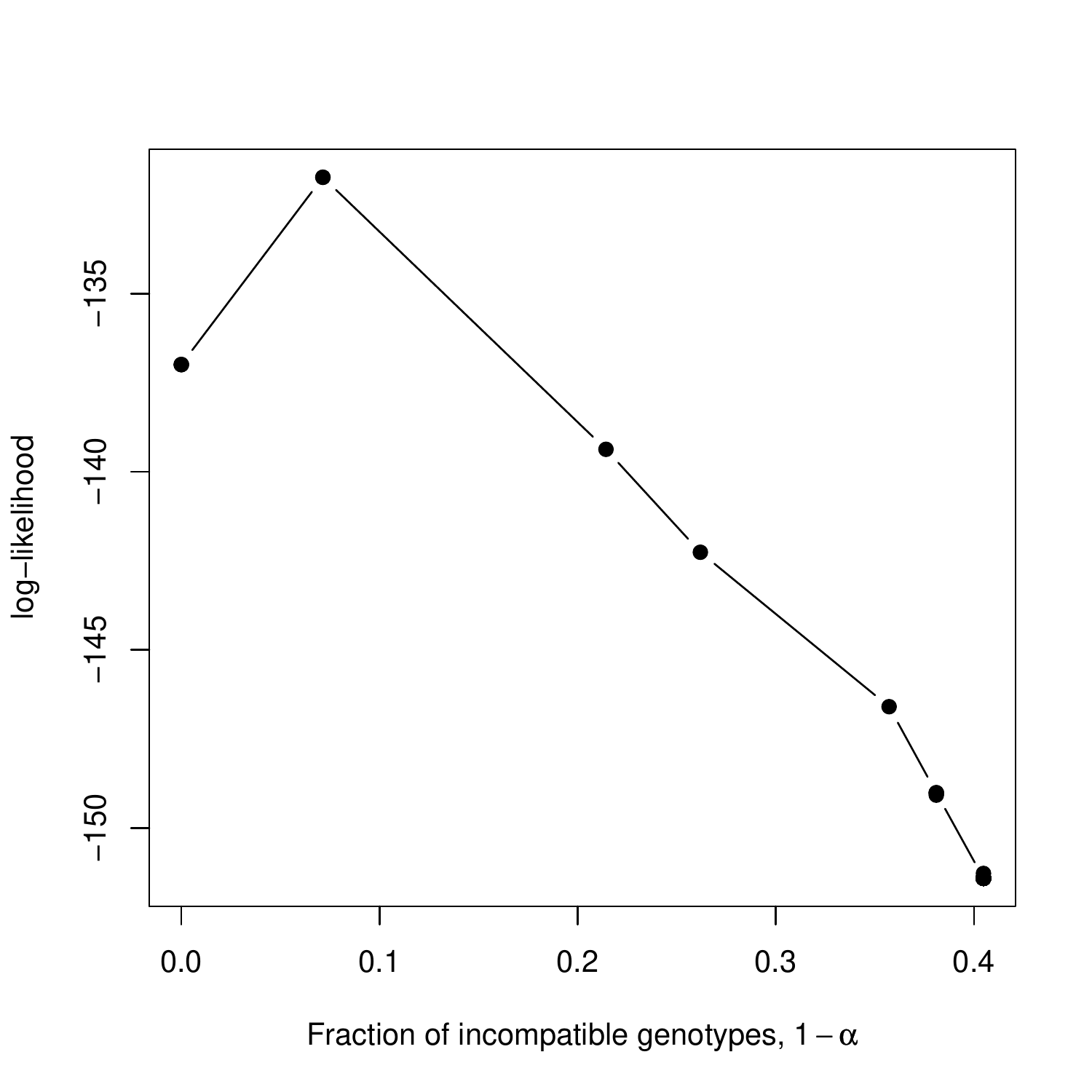} 
  \caption{ML estimation of the CT-CBN model for the breast cancer data.
  }
  \label{fig:breast}
\end{figure}

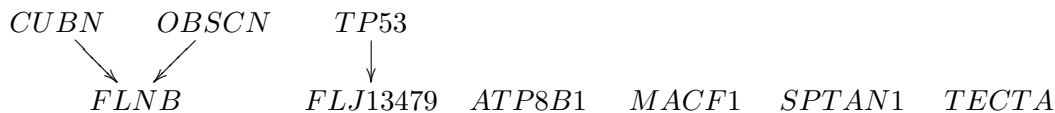
\begin{figure}[h!]
  \centering
  \makebox{
    \xymatrix@!=0.2cm{
      CUBN\ar@{->}[dr] &      & OBSCN\ar@{->}[dl] & & TP53\ar@{->}[d] \\
      & FLNB & & & FLJ13479 & & ATP8B1 & & MACF1 & & SPTAN1 & & TECTA
    }
    }
  \caption{Optimal breast cancer poset correponding to the maximum in .Figure~\ref{fig:breast}.
  }
  \label{fig:breast-poset}
\end{figure}

The colon cancer ML poset is the empty poset (Figure~\ref{fig:colon}).
By contrast, the  maximum likelihood breast cancer poset
explains 93\% of the observations (Figure~\ref{fig:breast})
and consists of three relations (Figure~\ref{fig:breast-poset}). 
Two of them form the conjuction stating that mutations in both 
the cubilin gene ({\em CUBN}) and the obscurin gene ({\em OBSCN}) 
occur before the $\beta$ filamin gene ({\em FLNB}) is
mutated. The third relation identifies tumor protein
p53 ({\em TP53}) as the mutational predecessor of the zinc finger
protein 668 ({\em FLJ13479}). {\em TP53} is mutated
in most colon and breast cancer tumors and known to
occur early in the somatic evolution of many cancers.

\smallskip

The last application is concerned with the evolution of drug
resistance in HIV. We study the accumulation of amino
acid changes in a segment of the HIV {\em pol} gene that
codes for the viral protein reverse transcriptase (RT).
The seven resistance-associated mutations 
41L, 67N, 69D, 70R, 210W, 215Y, and 219Q 
\citep{Johnson2006}
are considered, where, for example,  41L indicates the presence of
the amino acid leucine (L) at position 41 of the RT. 
A total of 364 viruses are analyzed that 
have been isolated from infected patients under therapy
with zidovudine, an antiretroviral drug targeting the RT.
Amino acid changes have been inferred after DNA sequencing of
the {\em pol} gene.

\begin{figure}[h!]
  \centering
  \includegraphics[width=0.7\textwidth]{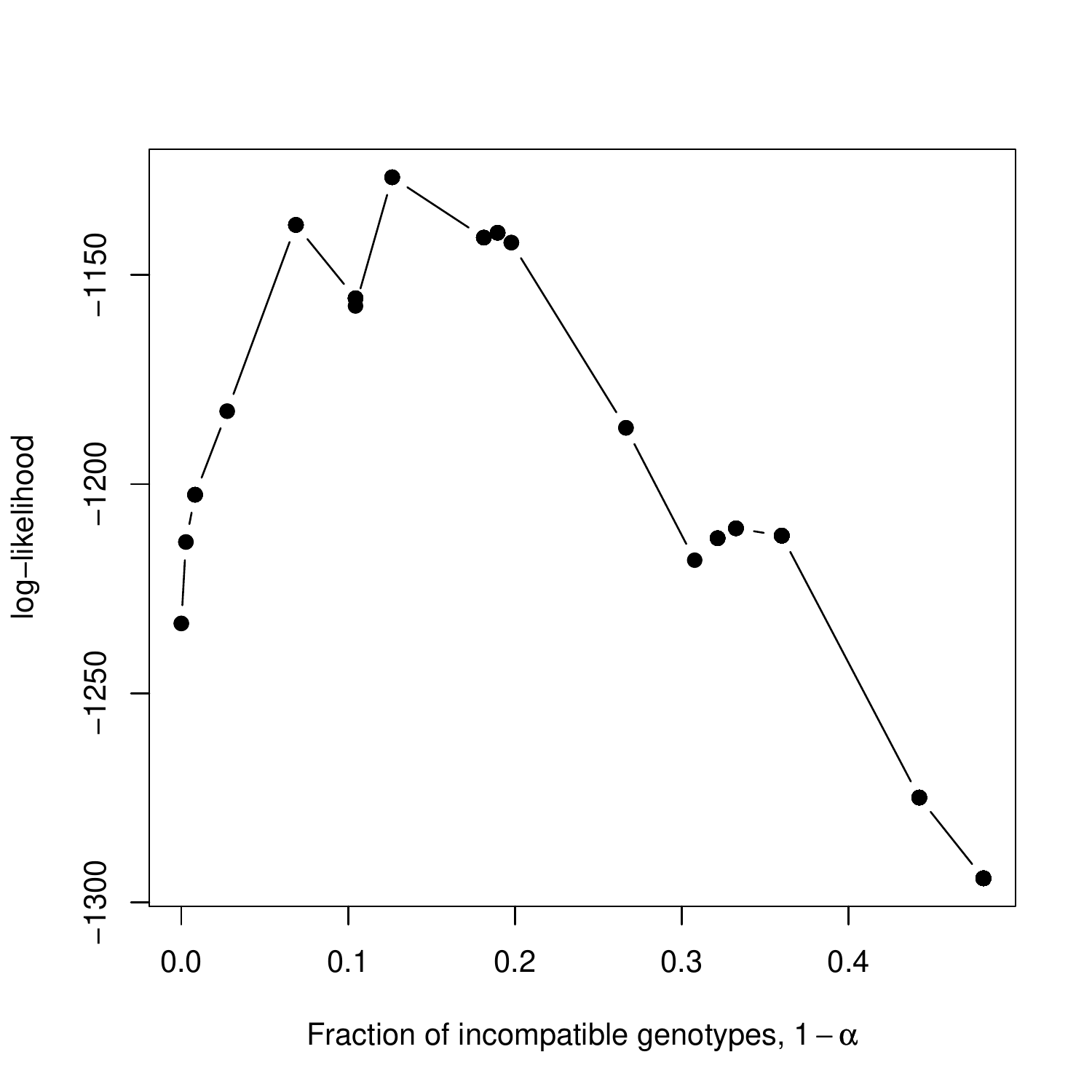}
  \caption{ML estimation of the CT-CBN model for the HIV drug resistance data.
  }
  \label{fig:zdv}
\end{figure}

\begin{figure}[h!]
  \centering
  \makebox{
    \xymatrix@!=0.2cm{
      215Y\ar@{->}[d] & & & 70R\ar@{->}[dl]\ar@{->}[dr] & \\
      41L\ar@{->}[d] & & 67N\ar@{->}[dr] & & 219Q\ar@{->}[dl] \\
      210W & & & 69D &
    }
  }
  \caption{Optimal HIV drug resistance poset correponding to the maximum in .Figure~\ref{fig:zdv}.
  }
  \label{fig:zdv-poset}
\end{figure}
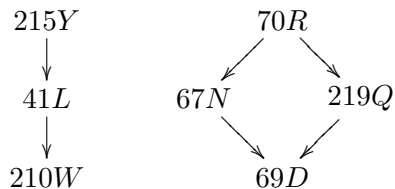

The optimal poset for the HIV drug resistance data explains
87\% of the observations (Figure~\ref{fig:zdv}). Its Hasse
diagram has two connected components (Figure~\ref{fig:zdv-poset}).
The first one represents the linear pathway 215Y $\prec$ 41L $\prec$ 210W,
whereas in the second component mutations 70R, 67N, 219Q,
and 69D, form a rhombus beginning with 70R and ending with 69D.
This finding confirms previous studies in which the same 
clustering of mutations has been described 
\citep{Beerenwinkel2005a,Boucher1992,Larder1989}.
The two groups of mutations are often referred to as the 
``215-41 pathway'' and the ``70-219 pathway'', respectively.
They provide alternative (but not exclusive) 
routes to resistance for HIV. The CT-CBN model captures
this escape behavior and suggests order constraints
within each group. 

\smallskip

In all applications discussed here, there are several posets
with near-optimal performance. Throughout we find these
posets to be very similar to the optimal set of relations.
For example, the second best HIV drug resistance poset
explains 93\% of the data and it differs from the optimal
one in Figure~\ref{fig:zdv-poset} only in the second
component which consists of the two relations 
70R $\prec$ 219Q and 70R $\prec$ 69D. Using cross-validation
techniques the variation in poset selection or in the
selection of individual relations can be studied and
may be used to select more robust poset structures.

For each of the four examples, the best mutagenetic trees
show inferior performance as compared to the CT-CBN posets. 
For example, the mutagenetic tree for prostate cancer found
in \citet[Fig.~2]{Rahnenfuehrer2005} explains only 56\% of 
the data at a log-likelihood of only $-248.4$. Unlike
mutagenetic trees, CT-CBNs can model the requirement of
multiple parent mutations. This type of order constraint 
was found in three of the four applications considered here. 


\section{Discussion}

Conjuctive Bayesian networks are statistical models for the accumulation 
of mutations. They are defined by a poset of mutations, which encodes
constraints on the order in which mutations can occur. Here we have
introduced the CT-CBN, a continuous time version of this model in which each
mutation appears after an exponentially distributed waiting time,
provided that all predecessor mutations in the poset have already
occurred. In an evolutionary process, this waiting time includes
the generation of the mutation plus the time it takes for the allele
to reach a frequency in the population that allows for its detection.
Since we consider only mutations with a selective advantage, the
waiting time will be dominated by the mutation process in large
populations and by the fixation process in small populations.
However, the model does not make any assumptions about the underlying
population dynamics. Hence the waiting time is the time to detection
of the mutation, which depends on the technology used for measurement.
For example, population sequencing of HIV samples can detect mutations
with a frequency of at least 20\%.

The CT-CBN informs about the order in which mutations tend to occur.
For a set of $n$ mutations, the number of possible pathways to evolve
the wild type into the type harboring all $n$ mutations is the number of
linear extensions of the poset. This cardinality increases rapidly
with $n$, but is hard to compute exactly \citep{Brightwell1991}. However,
evolution appears to follow only very few mutational paths to fitter
proteins \citep{Weinreich2006}. Thus, we generally expect to find posets
with much smaller lattices of order ideals than the full Boolean
lattice. Evolution takes place in this reduced genotype space
and can be modeled more efficiently. 

Estimation of the genetic event poset from observed data helps
understanding the phenotypic changes and biological mechanisms 
responsible for the fitness advantage. Furthermore, the poset allows
for identifying early and essential mutational steps that may
be predictive of clinical outcome or point to promising drug targets. 
In cancer research, \citet{Fearon1990} have proposed linear pathways 
of genetic alterations as a model of tumorigenesis. These models are 
known as ``Vogelgrams'' today \citep{Gatenby2003c}. The CT-CBN 
can be regarded as a generalization of the Vogelgram that is equipped 
with a statistical methodology for model selection and parameter estimation.
In particular, the CT-CBN allows for multiple evolutionary pathways and
makes explicit the timeline for the genetic alterations.

We have derived equations for the ML estimates of the model parameters and for
the expected waiting time of any genotype. These results are used in 
the EM algorithm for parameter estimation in the censored model. 
Censoring is modeled by assuming an exponentially distributed sampling 
time of the observed genotypes. This model appears most relevant for 
the data sets available, which often comprise cross-sectional data 
sampled after different but unknown time periods w.r.t.\ the 
evolutionary process. Other censoring schemes might be applicable in
the future and could also be worked out. For example, the sampling time
(but not the time of appearance of each mutation) can be observed in some 
situations, giving rise to a different marginalization of the fully
observed CT-CBN. 

Since model selection relies on a simple combinatorial criterion, and
the number of model parameters is only linear in the number of mutations,
we expect the CT-CBN to scale well with increasing data sets both in the number
of observations and the number of mutations.  In the cancer and HIV 
applications presented here, there are between 7 and 12 genetic events 
and 35 to 364 observations.  It is likely, however, that the number of 
genes associated with cancer progression, for example, is much higher 
than currently known \citep{Sjoeblom2006}. The running time of the EM
algorithm is dominated by the size of the distributive lattice of order
ideals. Thus, many mutations can be modeled as long as the number of mutational
pathways is limited.

\section*{Acknowledgments}  

Seth Sullivant was supported by NSF grant DMS-0700078.
Part of this work was done while Niko Beerenwinkel was
affiliated with the Program for Evolutionary Dynamics
at Harvard University and funded by a grant from the 
Bill \& Melinda Gates Foundation through the Grand Challenges 
in Global Health Initiative.

\bibliographystyle{abbrvnat}
\bibliography{nikos}

%
%
%
%
%
%
%
%

\end{document}